\newtheorem{theorem}{Theorem}[section]
\newtheorem{lemma}[theorem]{Lemma}
\title{An Improved PTAS for Covering Targets with Mobile Sensors}
\newcommand{\pA}{\mathcal{A}}
\newcommand{\pB}{\mathcal{B}}
\author{Nonthaphat Wongwattanakij\thanks{E-mail: nonthaphat.wo@ku.th. Department of Computer Engineering, Kasetsart University, Bangkok, Thailand.  Supported by Graduate Student Research Grant, Faculty of Engineering, Kasetsart University.}
  \and Nattawut Phetmak\thanks{E-mail: nattawut.p@ku.th. Department of Computer Engineering, Kasetsart University, Bangkok, Thailand.} 
  \and Chaiporn Jaikaeo\thanks{E-mail: chaiporn.j@ku.ac.th. Department of Computer Engineering, Kasetsart University, Bangkok, Thailand.} 
  \and Jittat Fakcharoenphol\thanks{E-mail: jittat@gmail.com. Department of Computer Engineering, Kasetsart University, Bangkok, Thailand.  Supported by Thailand Reseach Fund, Grant RSA-6180074} 
}
\begin{document}

\maketitle

\begin{abstract}
  This paper considers a movement minimization problem for mobile sensors.
  Given a set of $n$ point targets, the {\em $k$-Sink Minimum Movement Target Coverage Problem} is to schedule mobile sensors, initially located at $k$ base stations, to cover all targets minimizing the total moving distance of the sensors.  We present a polynomial-time approximation scheme for finding a $(1+\epsilon)$ approximate solution running in time $n^{O(1/\epsilon)}$ for this problem when $k$, the number of base stations, is constant.  Our algorithm improves the running time exponentially from the previous work that runs in time $n^{O(1/\epsilon^2)}$, without any target distribution assumption.  To devise a faster algorithm, we prove a stronger bound on the number of sensors in any unit area in the optimal solution and employ a more refined dynamic programming algorithm whose complexity depends only on the width of the problem.


\end{abstract}


\section{Introduction}
Wireless sensor networks (WSNs) are used in various types of applications such as military, industry, and disaster responses~\cite{Wireless2002Akyildiz,Wireless2008Yick}.
To effectively monitor a region of interest, sensor nodes are placed at specific locations so that they can collectively provide overall sensing coverage for the entire area~\cite{Coverage2011Wang,WuZLD-coverage-book}.
WSNs may consist of static sensor nodes, mobile sensor nodes, or both.
With presence of mobile sensor nodes, monitor areas may be dynamically adjusted by having these mobile nodes navigate to appropriate positions.
This capability helps reducing the overall number of sensor nodes and/or provides better sensing coverage over regions of interest.

In this paper we study the $k$-Sink Minimum Movement Target Coverage Problem, where we would like to schedule sensors, with sensing range $r$, from $k$ stations to cover all targets, while minimizing the total distance for every sensor.  
Gao, Chen, Wu, and Chen~\cite{GaoCWC-journalTNET} proposed a polynomial-time approximation scheme for this problem that runs in time $n^{O(1/\epsilon^2)}$ and returns a solution whose cost is at most $(1+\epsilon)$ of the optimal cost when the targets are distributed uniformly and randomly in the surveillance region.  This result first appeared as~\cite{ChenGWC_INFOCOM16} in INFOCOM'16.

Our contribution is as follows.  We observe that, as in many PTAS's for other geometric problems, their general framework, based on exhaustive search and shifting, does not depend on the target distribution; thus it works without the distribution assumption.  More over, we notice that the running time bound of their algorithm depends on an erroneous claim of the bound on the number of sensors needed in the optimal solution (in Theorem~3 in~\cite{ChenGWC_INFOCOM16} and in Theorem~5 in~\cite{GaoCWC-journalTNET}) for a particular region (see Appendix~\ref{sect:counter}) and provide a correction.  Our proof actually provides a stronger guarantee, allowing us to devise a more refined dynamic programming subroutine, the main contribution of this work, used in the algorithm that runs in time $n^{O(1/\epsilon)}$, which is an exponential improvement on the running time.  While we consider our result a fairly theoretical one, we hope that the dynamic programming approach presented here might find more applications in other networking problems.

We describe the network model, problem statements and a review of Gao~{\em et al.}~\cite{GaoCWC-journalTNET,ChenGWC_INFOCOM16} algorithm, called Energy Effective Movement Algorithm (EEMA), in Section~\ref{sect:prelims}.  Section~\ref{sect:algo} describes the improved algorithm and proves the running time and its performance guarantee.  Our running time analysis depends on the bound on the sensors needed in the optimal solution, proved in Section~\ref{sect:proofs}.

\section{Related works}

Coverage problems has been widely studied in computational geometry~\cite{art-gallery}.  These problems are important to wireless sensor networks (see recent survey in~\cite{Coverage2011Wang}
and the book by Wu~{\em et al.}~\cite{WuZLD-coverage-book}).  There are numerous heuristic algorithms for the placement and deployment problem, e.g., the algorithms for full coverage based on packing has been proposed by Wang and Tseng~\cite{WangT08-without-obstacle} for areas without obstacles and by Wang, Hu, and Tseng~\cite{WangHT08-with-obstacle} for areas with obstacles, for coverage enhancement using virtual forces~\cite{ZouC04-virtualforce}, for strategies that deals with coverage holes using Voronoi polygons~\cite{MahboubiMASM14-voronoi}, and for sensor relocation using Delaunay triangulation~\cite{KhampeerpatJ17}.  This paper as well as that of Gao~{\em et al.}~\cite{GaoCWC-journalTNET} studies movement scheduling problems.  For dynamic area coverage, Liu, Dousse, Nain, and Towsley~\cite{LiuDNT13-dynamic-coverage} considered coverage using mobile sensor networks.  Liu~{\em et al}~\cite{LiuDWS08-barrier} and He~{\em et al.}~\cite{HeCZZ12-cost-effective-barrier} considered scheduling problem when the goal is to protect an area using a barrier of sensors.  Ammari~\cite{Ammari12} tried to minimize sensor movement cost while ensuring area coverage.   For strategies for improving coverage, Wang, Lim, and Ma~\cite{WangLM09-survey-improving} provided a survey.  Guo and Jafarkhani~\cite{GuoJ19-movement-efficient} presented centralized and decentralized heuristics for optimizing sensor movement under network lifetime constraints.  Elhoseny~{\em et al.}~\cite{ElhosenyTYH18-opt-k-coverage} employed the genetic algorithm as a heuristic to deal with coverage requirements.

When theoretical guarantees are needed, approximation algorithms for many versions of coverage problems have been studied.  Section~\ref{sect:prelims-ptas} describes related work on polynomial-time approximation schemes on geometric graphs closely related to this work.  In this section, we list a few notable works on approximation algorithms for coverage problems.  For target coverage, if the goal is to schedule sensor activation groups to maximize the life time, Ding~{\em et al.}~\cite{DingWWLZW12-const-lifetime} devised a $(4+\xi)$-approximation algorithm for any $\xi>0$.  For restricted form of $k$-coverage, Xu and Song~\cite{XuS14-restricted} gave a $3$-approximation algorithm.   When a threshold for movement of each sensor is given and the objective is to maximize target coverage, a recent result by Liang, Shen, and Chen~\cite{LiangSC21-max-target} gave an algorithm with an approximation ratio of $(1-1/e)$.   Another important class of coverage problem is to find dominating sets in unit disk graphs.  Marathe~{\em et al.}~\cite{MaratheBHRR95} described a 5-approximation algorithm for finding dominating sets and a 10-approximation algorithm for finding connected dominating sets.  For weighed dominating sets, Erlebach and Mihal\'{a}k~\cite{ErlebachM10-MWDS} and Zou~{\em et al.}~\cite{ZouWXLDWW2011-mwcds} independently devised a $(4+\epsilon)$-approximation algorithms.  Zou~{\em et al.}~\cite{ZouWXLDWW2011-mwcds} also present a $(5+\epsilon)$-approximation algorithm for the connected version of the problem.  A recent book by Wu~{\em et al.}~\cite{WuZLD-coverage-book} is an excellent source for these developments.


\section{Preliminary and Problem Statement}
\label{sect:prelims}

\subsection{Model and Problem Statement}

Following Gao {\em et al.}~\cite{GaoCWC-journalTNET,ChenGWC_INFOCOM16}, the homogeneous network considered in this paper can be modeled as $G=(T,P,r,S)$, where $T=\{t_1,t_2,\ldots,t_n\}$ is the target set, $P=\{p_1,p_2,\ldots,p_k\}$ is the station set, $r$ is the sensing range, and $S=\{s_1,s_2,\ldots,s_m\}$ is the mobile sensor set.  The targets are static points on the plane that we want to cover using sensors with sensing range $r$.  We say that a sensor set $S$ {\em covers} a point $t_i$ if there exists a sensor $s_j\in S$ such that $d(s_j,t_i)\leq r$, where $d(a,b)$ denotes the Euclidean distance between two points $a$ and $b$ on the plane.  The coverage requirement ensures that every target $t_i\in T$ must be covered.  Each sensor $s_j\in S$ must be scheduled to move to its location from some station $p_\ell\in P$; the movement distance is $d(s_j,p_\ell)$.  Since the station set $P$ is static we can easily find $p_\ell\in P$ that minimizes $d(s_j,p_\ell)$.  We would later refer to the distance just as $d(s_j)$ defined to be $min_{p_\ell\in P} d(s_j,p_\ell)$.

Gao~{\em et al.}~\cite{GaoCWC-journalTNET,ChenGWC_INFOCOM16} defined the {\em $k$-Sink Minimum Movement Target Coverage Problem} ($k$-MMTC) as follows.  Given a set of targets $T$, station set $P$, and sensing radius $r$, find a set of sensors $S$ covering $T$ such that the total movement distance is minimized.  They proposed an algorithm called Energy Effective Movement Algorithm (EEMA) that is claimed to be a PTAS that solves this problem (reviewed in Subsection~\ref{sect:review-eema}).  

We would like to note that Gao~{\em et al.}~\cite{GaoCWC-journalTNET,ChenGWC_INFOCOM16} also assumed that the targets are distributed uniformly and randomly in the surveillance region.  Our work does not need this requirement.  In fact, we observe that their algorithm and proofs work even without this distribution assumption as well.

\subsection{Polynomial-Time Approximation Scheme}
\label{sect:prelims-ptas}

A {\em polynomial-time approximation scheme} (PTAS) is an algorithm that, given a constant $\epsilon>0$, returns a solution to the problem of cost within $(1+\epsilon)$ of the optimal in time polynomial in $n$, the size of the problem, assuming that $\epsilon$ is a constant.  Note that the running time may depend exponentially on $1/\epsilon$, i.e., the running time might increase very rapidly as the solution quality increases (as $1/\epsilon\rightarrow\infty$ when $\epsilon\rightarrow 0$).

A general framework for developing PTAS for geometric problems introduced by Hunt~{\em et al.}~\cite{HUNT1998238} follows the approach used in planar graphs by Baker~\cite{Baker94}.  First the problem is decomposed into much smaller subproblems which can be solved separately exhaustively either by dynamic programming or brute-force search.  Then, the ``locally-optimal'' solutions from these subproblems are combined to obtain the final solution.  Clearly, the combined solution might not be globally optimal because the dependency between each subproblem is neglected.  However, one can typically ensure that the neglected cost is small, i.e., at most $1/\epsilon$ fractions of the optimal cost, when the size of each subproblem is large enough using an averaging argument.  For geometric problems mostly on unit disk graphs, Hunt~{\em et al.}~\cite{HUNT1998238} consider an area of size $O(1/\epsilon)\times O(1/\epsilon)$ and use the fact that there is a feasible solution of size $O(1/\epsilon^2)$ to devise many PTAS's that run in time $n^{O(1/\epsilon^2)}$.  For dominating sets, which is another form of coverage problems, if the nodes are unweighted, there are several PTAS for the standard problem~\cite{HUNT1998238,nieberg2005ptas} and for the case when the dominating set is required to be connected~\cite{ZhangGWD2009unitball}.  (See~\cite{WuZLD-coverage-book} for a comprehensive treatment of this topics.)

The dynamic programming approach for covering problems is widely used.  Erik Jan van Leeuwen~\cite{vanLeeuwen05,vanLeeuwen06,vanLeeuwen-thesis09} has employed roughly the same speed-up idea to the dominating set problem in unit disk graphs.

\subsection{Review of EEMA~\cite{GaoCWC-journalTNET,ChenGWC_INFOCOM16}}
\label{sect:review-eema}

EEMA proposed by Gao, Chen, Wu, and Chen~\cite{GaoCWC-journalTNET,ChenGWC_INFOCOM16} works in two phases.  The first phase divides the surveillance region into a set of subareas.  This changes the nature of the problem from continuous optimization problem to a combinatorial one.  The second phase approximates the solution to the combinatorial problem constructed in the first phase.  We shall describe both phases and our contribution is essentially on the second phase of their proposed EEMA.

\subsubsection{First phase of EEMA}  

For each target $t_i$, the {\em detection cycle} $D(t_i)$ is a circle of radius $r$ centered at $t_i$.  Note that a sensor in $D(t_i)$ covers $t_i$.  Detection cycles may intersect; thus, a single sensor in the intersection can detect all targets associated with these cycles.  The first phase of EEMA identifies all minimal subareas defined by intersections of detection cycles (see Figure~\ref{fig:detection}).  Each subarea is called a {\em curved-boundary.}  Gao~{\em et al.} proved the following lemma (implicitly in Theorem~2 in~\cite{GaoCWC-journalTNET}).

\begin{figure}
  \centering
  \begin{tikzpicture}
    \coordinate (T1) at (1.0,2.1);
    \coordinate (T2) at (2.1,1.0);
    \coordinate (T3) at (2.7,2.2);
    \coordinate (T4) at (4.2,2.6);
    \draw (T1) circle (1);
    \draw (T2) circle (1);
    \draw (T3) circle (1);
    \draw (T4) circle (1);
    \filldraw[black] (T1) circle (1.5pt) node[anchor=south east] {$t_1$};
    \filldraw[black] (T2) circle (1.5pt) node[anchor=north] {$t_2$};
    \filldraw[black] (T3) circle (1.5pt) node[anchor=south] {$t_3$};
    \filldraw[black] (T4) circle (1.5pt) node[anchor=south west] {$t_4$};
  \end{tikzpicture}
  \caption{Detection cycles for $t_1,t_2,t_3,$ and $t_4$.  There are 9 minimal subareas to be considered.}
  \label{fig:detection}
\end{figure}
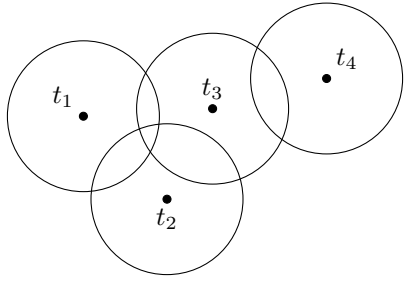

\begin{lemma}[\cite{GaoCWC-journalTNET}]
The number of minimal subareas defined by intersections of all $n$ detection cycles is $O(n^2)$
\end{lemma}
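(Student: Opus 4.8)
The plan is to bound the number of minimal subareas (curved-boundary regions) by relating them to a planar arrangement of circles. I would view the $n$ detection cycles as the boundary curves of an arrangement $\mathcal{A}$ of $n$ circles in the plane, where each minimal subarea is exactly a face of this arrangement (a maximal connected region not crossed by any circle). Since every circle is a closed Jordan curve of constant descriptive complexity, the combinatorial complexity of $\mathcal{A}$ is governed by the number of vertices, i.e., the pairwise intersection points of the circles.

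The key steps, in order: First, observe that any two distinct circles intersect in at most $2$ points, so the total number of intersection vertices is at most $2\binom{n}{2} = O(n^2)$. Second, apply Euler's formula for planar subdivisions: for a connected planar graph drawn on the sphere (adding a point at infinity to handle unbounded faces), $V - E + F = 2$, so $F = E - V + 2$. Third, bound the number of edges: each circle is partitioned by the vertices lying on it into arcs, and the total number of such arcs over all circles is at most twice the number of vertices (each vertex lies on exactly two circles and contributes to splitting each of them), giving $E = O(n^2)$; a circle with no intersection points on it contributes a single closed arc, adding only $O(n)$. Combining, $F = O(n^2) - O(n^2) + 2 = O(n^2)$, and the minimal subareas are exactly the faces of $\mathcal{A}$ (minus the outer face), so their number is $O(n^2)$.

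One subtlety to address: the minimal subareas in the statement are the regions defined by intersections \emph{of detection cycles}, so strictly one should note that regions entirely outside all detection cycles are not counted, which only decreases the bound. A second routine point is handling degeneracies — three circles meeting at a common point, or tangent circles — but these cases only reduce the number of vertices and edges, hence do not increase $F$; alternatively one can invoke a standard perturbation argument to assume general position. I would also remark that this is just the well-known bound on the complexity of an arrangement of $n$ circles (or more generally $n$ pseudocircles), so the proof amounts to instantiating that classical fact.

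I expect the main obstacle to be purely expository rather than mathematical: making the correspondence between ``minimal subareas defined by intersections of detection cycles'' and ``faces of a circle arrangement'' precise, and carefully accounting for the unbounded face and for circles carrying zero or one intersection points when applying Euler's formula. None of these steps involves nontrivial estimation — the $O(n^2)$ bound falls out immediately once the arrangement viewpoint is set up.
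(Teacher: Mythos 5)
Your argument is correct: it is the standard Euler-formula bound on the complexity of an arrangement of $n$ circles ($V\leq 2\binom{n}{2}$ vertices, $E=O(V+n)$ arcs, hence $F=O(n^2)$ faces), and the identification of minimal subareas with (a subset of) the faces of the arrangement is exactly the right viewpoint. Be aware, though, that this paper does not prove the lemma at all --- it is quoted from Gao~{\em et al.}~\cite{GaoCWC-journalTNET}, where it appears implicitly in their Theorem~2, so there is no in-paper proof to compare against; your write-up supplies a proof the present paper omits. One small point to tighten: $V-E+F=2$ holds for \emph{connected} plane graphs, and an arrangement of circles may be disconnected (e.g., disjoint detection cycles); adding a point at infinity does not fix this. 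The correct form is $V-E+F=1+C$ where $C\leq n$ is the number of connected components, which still yields $F=O(n^2)$, so the bound is unaffected.
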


They also gave an $O(n^2)$-time algorithm (called the Graph Conversion algorithm) for finding all minimal subareas.

\subsubsection{Second phase of EEMA}
During the second phase, Gao~{\em et al.} noted that placing a sensor anywhere in a particular minimal subarea covers the same set of target, defined as {\em covered-set} of that subarea.  Therefore, to minimize the total distance, if one has to schedule a sensor to that subarea, the sensor must travel from the closest station.  For the subarea $\sigma_i$, they gave an algorithm for finding $w(\sigma_i)$, the minimum distance from any point in $\sigma_i$ to any station in $P$.  Given the representation of all subareas, computing $w(\sigma_i)$ for all subareas $\sigma_i$ takes only $O(n^2k)$.

After all these preprocessing steps, we are left with a combinatorial problem of choosing a subset of subareas to cover all targets.

Let $Q$ be the bounding box containing all targets.  As in many geometric PTAS, Gao~{\em et al.} then employed the algorithm, called the Partition algorithm, that divides $Q$ into cells $cell(Q)$, each of size $2mr\times 2mr$, solves the selection optimally for each cell $e\in cell(Q)$, and returns the union of selected subareas.  

The essential subproblems for the Partition algorithm deals with each cell $e$.  To solve each subproblem, Gao~{\em et al.} used a brute-force search that runs in time $n^{O(m^2)}$.  To bound the running time they proved the following key lemma (implicitly in Theorem~5).

\begin{lemma}[\cite{GaoCWC-journalTNET,ChenGWC_INFOCOM16}]
The number of sensors (and subareas selected) in the optimal solution in each cell $e$ is at most $O(m^2)$.
\end{lemma}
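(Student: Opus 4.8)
The plan is to reduce this cell-wide bound to a purely local packing statement. Since the cell $e$ is a square of side $2mr$, it is covered by $(2m)^2 = O(m^2)$ axis-parallel squares of side $r$, so it suffices to show that an optimal solution places only $O(1)$ sensors — equivalently, selects only $O(1)$ subareas — inside any one such unit square $R$. (For the sharper bound mentioned in the introduction one runs the same argument with $R$ of side $\Theta(r)$ and tracks the constant.)

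First I would put the optimal solution in a convenient normal form. Two sensors in the same subarea cover the same target set, so duplicates only add cost; hence we may assume one sensor per selected subarea. Likewise, if the covered-set of some selected subarea is contained in the union of the covered-sets of the others, that subarea can be removed without increasing the cost; so we may assume the solution is \emph{irredundant}: every selected subarea owns a \emph{private target} covered by no other selected subarea. Finally, the at most $O(k) = O(1)$ subareas that contain a station have cost $0$; we set these aside and bound only the cost-positive ones.

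Now fix a unit square $R \subseteq e$ and suppose it contains sensors $s_1,\dots,s_q$ of the optimal solution. Every target covered by some $s_i$ lies within distance $r$ of $R$, hence inside a region $R^{+}$ of diameter $O(r)$, and $R^{+}$ can be covered by a constant number of radius-$r$ disks with centres in $R^{+}$. Let $s^{*}$ be an $s_i$ of minimum $w(\cdot)$. Every point of $R^{+}$ is within $O(r)$ of $s^{*}$, so a sensor placed there has cost at most $w(s^{*}) + O(r)$. If $w(s^{*}) = \Omega(r)$, then replacing $s_1,\dots,s_q$ by the $O(1)$ disks covering $R^{+}$ keeps the solution feasible and costs $O(1)\cdot\bigl(w(s^{*}) + O(r)\bigr) = O\bigl(w(s^{*})\bigr)$, whereas the sensors removed cost at least $q\,w(s^{*})$; hence $q = O(1)$, or else we contradict optimality. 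Thus the only unit squares that can hold many sensors are those with $w(s^{*}) = o(r)$, i.e.\ the squares lying within distance $O(r)$ of one of the $k$ stations, and there are only $O(k) = O(1)$ of those.

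The station-adjacent case is the main obstacle, and it is the point where the original argument seems to break down: the naive exchange no longer helps, since the removed sensors may each cost arbitrarily little, so relocating them to cost-$\Theta(r)$ positions is a net loss. The fix has to be more delicate. One would add a zero-cost sensor at the nearby station, covering every target within distance $r$ of it, and then argue that the remaining private targets lie in a bounded-diameter shell just outside radius $r$ around the station, on which an optimal min-weight piercing set is necessarily small — the key geometric fact being that nudging a sensor radially outward from the station already makes it cover a wide angular cone of shell targets, so a constant number of cheap sensors suffices while more than a constant would contradict optimality. Carrying this out gives a clean $O(1)$ bound on every unit square; summing over the $O(m^2)$ squares of the cell yields the lemma, and the same analysis is what produces the strengthened ``sensors per unit area'' guarantee advertised in the introduction.
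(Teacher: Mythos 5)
You should know up front that the paper does not prove this lemma --- it \emph{refutes} it. This is the erroneous claim of Gao~et~al.\ that the present paper corrects: Appendix~\ref{sect:counter} exhibits $k$ targets, each paired with its own station just outside sensing range, for which every optimal solution must use $k$ sensors in a single cell, so no bound of the form $O(m^2)$ independent of $k$ can hold. The corrected statement, proved in Section~\ref{sect:proofs}, is $O(m^2+k)$ (indeed $O(m+k)$ per strip). Your proof, read carefully, actually targets the corrected bound: you separate out the station-adjacent squares, charge them $O(k)\cdot O(1)$ extra, and then absorb this into $O(m^2)$ by invoking $k=O(1)$. That last step is legitimate under the paper's standing assumption that $k$ is constant, but the $+k$ term is exactly what the original lemma (whose proof only showed that a \emph{feasible} solution with $O(m^2)$ sensors exists, not that the distance-optimal one uses that few) failed to account for. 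To your credit, you independently located the error.

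The genuine gap is in the station-adjacent case, which you leave as a sketch. The difficulty there is not merely that the removed sensors may be cheap; it is that the optimal solution may place sensors at \emph{many different distance scales} from the station --- at distances $a_1 > a_2 > \cdots$ with $a_{i+1}\le a_i/2$ --- and each scale individually admits only an ``$O(1)$ sensors per ring'' exchange (the paper's Lemma~\ref{lemma:ring-bound}), which, as the paper notes, can be applied indefinitely without ever bounding the total. Your one-line geometric fact (``nudging a sensor radially outward covers a wide angular cone'') is one ingredient, but by itself it does not bound the number of scales. The paper needs a genuine progress measure: each new level either adds a constant angular width (at least $\arccos(13/20)$) to the set of covered directions around the station or merges two previously covered angular intervals, and since the total angle is $2\pi$ the number of levels is $O(1)$; this rests on Lemmas~\ref{lemma:one-circle-coverage}, \ref{lemma:two-regions-close} and~\ref{lemma:additional-angles}, which require real geometric work (e.g.\ showing that a sensor whose extra coverage splits into two components can be replaced by two strictly cheaper sensors). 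Your sketch does not supply any of this. The far-from-station half of your argument (the $w(s^{*})=\Omega(r)$ exchange) is sound and matches the paper's Lemma~\ref{lemma:far-sensor-bound}, modulo fixing a concrete threshold such as $r/2$ in place of the $\Omega(r)$ versus $o(r)$ dichotomy.
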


However, this lemma is incorrect (see Appendix~\ref{sect:counter}).  We provide a fix to this lemma (in Section~\ref{sect:proofs}) that implies that the number of sensors is $O(m^2+k)$, yielding that their algorithm for each subproblem actually runs in time $n^{O(m^2+k)}$.  More generally, if we have an upperbound of $L$ on the number of sensors, a brute-force algorithm would run in time $n^{O(L)}$.  We also provide a stronger bound that allows us to improve the running time to $n^{O(m+k)}$, improving the running time exponentially.

After solving each subproblem, each solution is combined straight-forwardly.  Gao~{\em et al.} used shifting techniques to show that if the targets are distributed uniformly in $Q$ (as described in the model in Section~III-A), the total cost is $(1+3/m)$ times the optimal cost, providing the guarantee of $1+\epsilon$ of the PTAS.  We note that their proof works even without the assumption on the distribution of the targets, for completeness, we also provide the proof of this performance guarantee in Section~\ref{sect:performance}.


\section{The Algorithm}
\label{sect:algo}

We focus on the second phase of EEMA where we work entirely on the combinatorial version of the problem (after the transformation using the Graph Conversion algorithm of~\cite{GaoCWC-journalTNET,ChenGWC_INFOCOM16}).  Recall that we are given a set of $h$ minimal subareas $R=\{\sigma_1,\sigma_2,\ldots,\sigma_h\}$ defined by intersections of detection cycles.  Recall also that $h=O(n^2)$.  Also for each subarea $\sigma_i$, we also know the minimum distance $w(\sigma_i)$ needed if we schedule a sensor into this subarea.  The goal is to select a subset of subareas $R'\subseteq R$ that minimizes the total distance while ensuring that these subareas cover all targets.

Our main subroutine works with a square region of size $2mr\times 2mr$, referred to as a {\em cell}.  In Subsection~\ref{sect:dp}, we present our key contribution, an algorithm for solving the problem optimally when all targets are inside a cell that runs in time $n_e^{O(m+k)}$, where $n_e$ is the number of subareas that can cover targets in $e$.  Since later on we set $m=\Theta(1/\epsilon)$, this implies the total running time of $n^{O(1/\epsilon)}$, an improvement over the brute-force algorithm of~\cite{GaoCWC-journalTNET} that runs in time $n^{O(1/\epsilon^2)}$.

To solve the whole problem, the surveillance region is partitioned into many cells, where the dynamic programming can be applied to.  Let $Q$ be the bounding box covering the surveillance region.  More specifically, assume that the lower-left corner of $Q$ is at $(0,0)$ and its upper-right corner is at $(M,M)$.  To facilitate the shifting, we further assume that targets are inside the square whose corners are $(2mr,2mr)$ and $(M,M)$, i.e., there are no targets whose $x$ or $y$ co-ordinates are less than $2mr$.  We divide $Q$ uniformly into square cells of size $2mr\times 2mr$.   There will be $\lceil M/2mr \rceil\cdot\lceil M/2mr \rceil$ cells.  Each cell is a square whose lower-left and upper-right corners are of the form $(i\cdot 2mr, j\cdot 2mr)$ and $((i+1)\cdot 2mr, (j+1)\cdot 2mr)$.  Let $cell(Q)$ be the set of all cells.  See Figure~\ref{fig:bounding-noshift} (a).

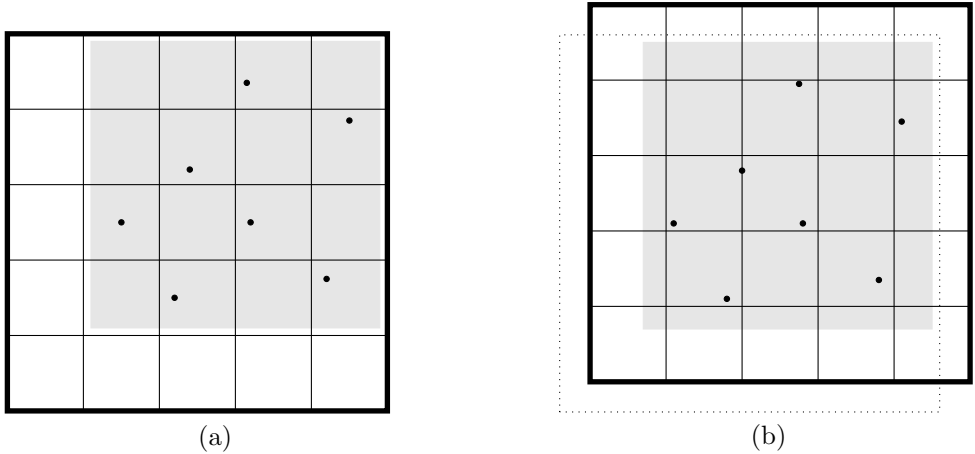
\begin{figure}
  \centering
  \begin{minipage}{.45\textwidth}  
  \centering 
  \begin{tikzpicture}
    \filldraw[white] (0,0) rectangle (5.5,5.5);
    \filldraw[gray!20] (1.1, 1.1) rectangle (4.9,4.9);
    \foreach \x/\y in {1.5/2.5, 2.2/1.5, 4.2/1.75, 3.2/2.5, 4.5/3.85, 2.4/3.2,  3.15/4.35} {
        \filldraw (\x,\y) circle (1pt);
    }
    \draw[line width=2pt] (0,0) rectangle (5,5);
    \foreach \x in {1,...,4} {
        \draw (0,\x) -- (5,\x);    
        \draw (\x,0) -- (\x,5);    
    }
  \end{tikzpicture} \\ (a)
  \end{minipage}
  \begin{minipage}{.45\textwidth}
  \centering
  \begin{tikzpicture}
    \filldraw[white] (0,0) rectangle (5.5,5.5);
    \draw[dotted] (0, 0) rectangle (5,5);
    \filldraw[gray!20] (1.1, 1.1) rectangle (4.9,4.9);
    \foreach \x/\y in {1.5/2.5, 2.2/1.5, 4.2/1.75, 3.2/2.5, 4.5/3.85, 2.4/3.2,  3.15/4.35} {
        \filldraw (\x,\y) circle (1pt);
    }
    \draw[line width=2pt] (0.4,0.4) rectangle (5.4,5.4);
    \foreach \x in {1,...,4} {
        \draw (0.4,\x+0.4) -- (5.4,\x+0.4);    
        \draw (\x+0.4,0.4) -- (\x+0.4,5.4);    
    }
  \end{tikzpicture} \\ (b)
  \end{minipage}  
  \caption{(a) the input region with targets (shown in shaded area), the bounding box $Q$, and the set of all cells of size $2mr\times 2mr$.
  (b) The shifted bounding box $Q_f$ and its cells $cell(Q_f)$.
  }
  \label{fig:bounding-noshift}
\end{figure}

The algorithm then applies the dynamic programming subroutine to every cell $e\in cell(Q)$, and returns the union of the selected subareas.  Gao~{\em et al.}~\cite{GaoCWC-journalTNET} proved that the solution returned is $4$-approximate.  

Using standard technique for PTAS design, as also in~\cite{GaoCWC-journalTNET}, we shift the bounding box to ``average out'' the cost of the optimal to obtain the $(1+\epsilon)$-approximate solution.  We discuss this step in Subsection~\ref{sect:performance} and present the proof of the performance guarantee. 

Finally, using the bounds to be proved in Section~\ref{sect:proofs}, we give the analysis of the running time in Subsection~\ref{sect:running-time}.

\subsection{Dynamic Programming}
\label{sect:dp}

In this section, we describe a dynamic programming that, for a given cell $e$ of size $2mr\times 2mr$, finds the optimal schedule for sensors to cover all the targets in $e$.  The cell $e$ is partitioned into vertical strips each of size $r\times 2mr$ (defined formally later).  
If we know that the number of sensors in the optimal solution for each strip of $e$ is at most $L$, the dynamic programming algorithm in this section runs in time $n_e^{O(L)}$, where $n_e$ is the number of subareas containing any targets inside cell $e$.
We give a proof in Lemma~\ref{lemma:sensor-bound} that $L=O(m+k)$; since we assume that $k$ is a constant, the running time of the dynamic programming is $n_e^{O(m)}$.

Because we use dynamic programming, we shall define other types of subproblems.  To avoid confusion later in this section, we shall refer to this main subproblem in a cell as the {\em cell problem}.    We further divide the cell into $2m$ vertical strips, each of size $r\times 2mr$.  We refer to these strips as strips $H_1, H_2,\ldots, H_{2m}$.  See Figure~\ref{fig:dp-subproblems}.

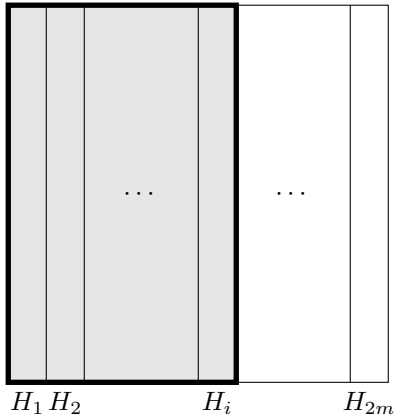
\begin{figure}
  \centering
  \begin{tikzpicture}
    \filldraw[gray!20] (0.0,0) rectangle (3,5);
    \draw[line width=2pt] (0,0) rectangle (3,5);
    \draw (0,0) rectangle (5,5);
    \draw (0.5,0) -- (0.5,5);
    \draw (1.0,0) -- (1.0,5);
    \draw (2.5,0) -- (2.5,5);
    \draw (3.0,0) -- (3.0,5);
    \draw (4.5,0) -- (4.5,5);
    \node[anchor=north] at (0.25,0) {$H_1$};
    \node[anchor=north] at (0.75,0) {$H_2$};
    \node[anchor=north] at (2.75,0) {$H_i$};
    \node[anchor=north] at (4.75,0) {$H_{2m}$};
    \node at (1.75,2.5) {$\cdots$};
    \node at (3.75,2.5) {$\cdots$};
  \end{tikzpicture}
  \caption{Strips $H_1,\ldots,H_{2m}$ and subproblem $A_i$ (shown in shade).}
  \label{fig:dp-subproblems}
\end{figure}

We define the following subproblems $\pA_i$ for each $1\leq i\leq 2m$ as follows.

{\bf Subproblem $\pA_i$:} find subset of subareas $R'$ that covers all targets in $H_1\cup H_2\cup\cdots\cup H_i$ with the minimum total distance.

Let $a_i$ denote the optimal total distance for $\pA_i$.  Recall that with this definition $a_{2m}$ is the optimal cost for the cell problem.  Here we only consider the algorithm that finds the minimum cost; to also find the actual solution, one only need to keep the best solutions together with its costs in the dynamic programming states.

For strip $H_i$, let $R_i\subseteq R$ be subset of subareas that cover some target in $H_i$.
For a subset $U\subseteq R_i$ of subareas of size at most $L$ covering some targets in $H_i$, we also define subproblems $\pB_i(U)$ for each $1\leq i\leq 2m$ as follows.

{\bf Subproblem $\pB_i(U)$:} find subset of subareas $R'$ that covers all targets in $H_1\cup H_2\cup\cdots\cup H_i$ such that $R'\cap R_i=U$ with the minimum total distance.

Note that $\pB_i(U)$ considers the same problem as $\pA_i$ but with additional conditions that to cover targets in $H_i$ one must use subareas in $U$.  
Let $b_{i,U}$ denote the optimal total distance for $\pB_i(U)$.  If it is impossible to do so, we let $b_{i,U}=\infty$.  Therefore, we know that 
\[
a_i = \min_{U\subseteq R_i: |U|\leq L} b_{i,U}.
\]
Hence, it is enough to solve $\pB_i(U)$.

As the base case, to solve $\pB_1(U)$ for each $U\subseteq R_1$, we can easily check if $U$ covers all targets and keep its cost or keep the value of $\infty$ otherwise.  Because $|U|\leq L$, there are at most $O(n_e^{L})$ subproblems to consider.  Since it takes $O(n^3)$ for each subproblem, the total running time is $n_e^{O(L)}$.

To solve $\pB_i(U)$ for each $U\subseteq R_i$, first note that since the strip width is $r$, any subareas $\sigma'$ covering any targets in $H_{i-2}$ cannot be in $R_i$, i.e., they cannot help covering targets in $H_i$.  On the other hand, subareas in $R_i$ cannot cover any targets in $H_1,\ldots,H_{i-2}$.  Therefore, when considering solutions, we only need to worry about dependencies between subareas in $U\in R_i$ and $U'\in R_{i-1}$.  Thus to find the solution for $\pB_i(U)$, we enumerate all $U'\in R_{i-1}$ and checks 
\begin{itemize}
    \item if $U$ and $U'$ are compatible (i.e., for subareas $\sigma$ in both $R_{i-1}$ and $R_i$, $U$ and $U'$ make the same decision, i.e., $\sigma\in U$ iff $\sigma\in U'$), and
    \item if $U\cup U'$ covers all targets in $H_i$.
\end{itemize}
If both conditions are met, the set of subareas for $\pB_{i-1}(U')$ and $U$ form a feasible solution for $\pB_i(U)$.  Finally, we choose $U'$ that minimizes the cost.

For each $U$, there are $O(n_e^{L})$ subsets $U'$ to consider.  Therefore to solve all $\pB_i(U)$ for all $U\in R_i$, the algorithm considers $O(n_e^{L})\cdot O(n_e^{L})=n_e^{O(L)}$, resulting in the running time of $n_e^{O(L)}$ as well.  The total running time for solving all $\pA_i$ is thus $m\cdot n_e^{O(L)}=m\cdot n_e^{O(m+k)}$, since $L=O(m+k)$.

\subsection{Shifting and the Performance Guarantee}
\label{sect:performance}

The shifting procedure proceeds in $m$ rounds.  For each round $f$, for $0\leq f\leq m-1$, the bounding box $Q$ is translated so that its lower-left corner is at $(2f,2f)$.  Let $Q_f$ denote the translated bounding box and $cell(Q_f)$ be the cells in $Q_f$.  Let $S_f$ be the set of subareas returned from round $f$.  The shifting procedure selects the solution with the minimum cost over returned solutions $S_f$'s.
See Figure~\ref{fig:bounding-noshift} (b).

We shall prove that there exists round $f$ such that the cost of $S_f$ is $1+4/m$ of the optimal cost.  Note that Gao~{\em et al.} also presented a similar proof in Theorem~8 of~\cite{GaoCWC-journalTNET} (appeared as Theorem~6 in~\cite{ChenGWC_INFOCOM16}) but they did not realize that this proof works for general case without the target distribution assumption.  We include the proof here for completeness.

\begin{theorem}
The shifting procedure returns a solution whose cost is at most $1+4/m$ of the optimal cost.
\label{thm:perf}
\end{theorem}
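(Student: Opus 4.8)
The plan is to use the standard shifting argument of Baker and Hochbaum--Maass, adapted to the two-dimensional grid of cells. Let $\mathrm{OPT}$ denote the cost of an optimal solution, realized by a set of subareas $R^*$ with associated sensors; each sensor sits in some subarea $\sigma$ and contributes $w(\sigma)$ to the cost. First I would fix an optimal solution $R^*$ and, for each shift $f$ with $0 \le f \le m-1$, consider the grid $cell(Q_f)$. The key observation is that a subarea has small diameter (it lies inside a detection cycle of radius $r$, hence has diameter at most $2r$), so each subarea in $R^*$ is cut by at most a bounded number of the vertical grid lines and at most a bounded number of the horizontal grid lines of $cell(Q_f)$ --- in fact, since consecutive grid lines are $2mr$ apart and a subarea has diameter $\le 2r < 2mr$ for $m \ge 2$, each subarea straddles at most one vertical grid line and at most one horizontal grid line of any single shifted grid.

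Next I would set up the charging. For a fixed shift $f$, call a subarea of $R^*$ \emph{$f$-cut} if it is not entirely contained in a single cell of $cell(Q_f)$ --- equivalently, it is crossed by a vertical or horizontal grid line of $Q_f$. When we run the dynamic programming cell-by-cell on $cell(Q_f)$, within each cell the DP finds the optimum cover of the targets in that cell; a safe feasible (not necessarily optimal) choice for the DP in cell $e$ is to take all subareas of $R^*$ that intersect $e$. Summing the resulting cost over all cells of $cell(Q_f)$, a subarea contained in one cell is counted once, while an $f$-cut subarea is counted once for each cell it meets (at most $4$, if it straddles both a vertical and a horizontal line). Hence the cost $c(S_f)$ of the DP output for shift $f$ satisfies
\[
c(S_f) \;\le\; \mathrm{OPT} \;+\; 3\!\!\sum_{\sigma \in R^*:\ \sigma\ \text{is}\ f\text{-cut}} w(\sigma).
\]
Now I average over $f = 0, 1, \ldots, m-1$. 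Because the vertical grid lines of the $m$ shifts occupy $m$ distinct residue classes modulo $2mr$ spaced $2$ apart (and likewise horizontally), and each subarea has horizontal extent $< 2r \le 2 \cdot (2m r)/ (2m)$... more carefully: each subarea is crossed by a vertical grid line of $Q_f$ for at most a bounded number of values of $f$ --- since the subarea has diameter $\le 2r$ and the shift offsets $2f$ range over a window of length $2m$, a given subarea is $f$-vertically-cut for at most (roughly) $2r$-many... I would make this precise so that each $\sigma \in R^*$ is $f$-cut for at most $4$ values of $f$ out of the $m$ shifts (two for vertical, two for horizontal; constants to be pinned down using $r$ and the $2mr$ cell width). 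Therefore
\[
\sum_{f=0}^{m-1} \sum_{\sigma\ f\text{-cut}} w(\sigma) \;\le\; 4\sum_{\sigma \in R^*} w(\sigma) \;=\; 4\,\mathrm{OPT},
\]
so the average of $c(S_f)$ over the $m$ shifts is at most $\mathrm{OPT} + (12/m)\,\mathrm{OPT}$... which is worse than the claimed $1 + 4/m$, so I would need to be tighter: either observe that a cut subarea is counted at most twice (not four times) in the cell-sum when only one grid line crosses it, and cut subareas crossed by two lines are charged to two shifts each, or absorb the constant by choosing the accounting so the final bound reads $1 + 4/m$. The cleanest route, matching the theorem statement, is: each $\sigma$ fails to be interior to its cell for at most $4$ of the $m$ shifts and in the cell-sum a cut $\sigma$ is overcounted by at most $1$ extra copy per crossing line, giving an aggregate overcount of at most $4\,\mathrm{OPT}$ across all shifts, hence some shift $f$ has $c(S_f) \le (1 + 4/m)\,\mathrm{OPT}$. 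Since the shifting procedure returns the minimum-cost $S_f$, this $f$ witnesses the bound.

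The main obstacle is making the overcounting constant come out to exactly $4$ rather than some larger constant: one must be careful about whether a subarea straddling a grid-line \emph{corner} (both a vertical and a horizontal line of the same shift) is counted in up to $4$ cells, and simultaneously how many of the $m$ shifts can cut a given subarea. The bookkeeping has to be arranged --- as Gao \emph{et al.} do in their Theorem~8 --- so that the product of (number of bad shifts per subarea) and (overcount per bad shift), summed correctly, telescopes to $4\,\mathrm{OPT}$. A secondary point to verify is that the DP genuinely returns, for each cell, a cost no larger than that of the restriction of $R^*$ to that cell, which is immediate since $\{\sigma \in R^* : \sigma \cap e \neq \emptyset\}$ is a feasible input to the cell problem for $e$ and the DP is optimal for the cell problem; and that $S_f = \bigcup_e (\text{DP output on } e)$ is globally feasible because every target lies in some cell of $cell(Q_f)$ and is covered by that cell's sub-solution.
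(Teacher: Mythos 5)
Your overall strategy is exactly the paper's: fix an optimal set $S^*$, use the per-cell optimality of the dynamic program to bound the cost of $S_f$ by the cost of $S^*$ restricted to each cell of $cell(Q_f)$, observe that only subareas meeting cell boundaries are overcounted, and average over the $m$ shifts. The two side points you flag at the end are also handled correctly (the restriction of $S^*$ to a cell is a feasible input to the cell problem, and the union of the per-cell outputs is globally feasible).

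The gap is precisely the one you concede: your bookkeeping does not deliver the constant $4$, and the ``cleanest route'' you sketch is never carried out. Concretely, you multiply an overcount of $3$ extra copies per bad shift by up to $4$ bad shifts per subarea, which lands at $1+12/m$, not $1+4/m$. The paper avoids this product entirely by counting \emph{total incidences} rather than (overcount per shift) times (number of bad shifts). For each shift $f$ it writes $w(S_f)\le w(S^*)+\sum_{e\in cell(Q_f)}\sum_{\sigma\in S^*(\delta(e))}w(\sigma)$, where $S^*(\delta(e))$ is the set of subareas of $S^*$ that cover a target in $e$ \emph{and} intersect the boundary of $e$; the remaining (interior) subareas are pairwise disjoint across cells and hence contribute at most $w(S^*)$ in total, with no factor of $3$. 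It then bounds, for each fixed $\sigma\in S^*$, the number of pairs $(f,e)$ with $\sigma\in S^*(\delta(e))$ over \emph{all} $m$ shifts by $4$: since $\sigma$ has diameter at most $2r$, over all shifts it meets at most one vertical grid-line position and at most one horizontal grid-line position, and each such line is shared by two cells. Hence $\sum_{f}\sum_{e\in cell(Q_f)}\sum_{\sigma\in S^*(\delta(e))}w(\sigma)\le 4\,w(S^*)$, the average of $w(S_f)$ over $f$ is at most $(1+4/m)\,w(S^*)$, and some shift achieves this. The idea you were missing is that ``how many cells a cut subarea touches in one shift'' and ``how many shifts cut it'' are not independent quantities to be multiplied; the right quantity is the total number of (shift, cell) boundary incidences per subarea, which is $4$, and the per-shift inequality should charge a boundary subarea its full weight once per incident cell rather than separating out an overcount.
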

\begin{proof}
For any set of subareas $S'$, let $w(S')$ be its cost, i.e,
\[
w(S') = \sum_{\sigma\in S'} w(\sigma).
\]

Let $S^*$ be the optimal set of subareas and $w(S^*)$ be the optimal total movement cost. 
Consider round $f$ and the returned solution $S_f$.  For each $e\in cell(Q_f)$, for a set of subareas $S'$, let $S'(e)$ be the set of subareas in $S'$ that cover some target in $e$.
From the optimality of the dynamic programming algorithm, we have that for any $e$,
\[
\sum_{\sigma\in S_f(e)} w(\sigma) \leq \sum_{\sigma\in S^*(e)} w(\sigma).
\]
For a cell $e$ and a set of subareas $S'$, let $S'(\delta(e))$ be the set of subareas in $S'$ that cover some target in $e$ and also intersect the boundary of $e$.  We can split the cost of $S^*(e)$ into two parts, i.e.,
\[
\sum_{\sigma\in S^*(e)} w(\sigma)=
\sum_{\sigma\in S^*(e) \setminus S^*(\delta(e))} w(\sigma) +
\sum_{\sigma\in S^*(\delta(e))} w(\sigma).
\]
Note that for different $e,e'\in cell(Q_f)$, $S^*(e) \setminus S^*(\delta(e))$ and $S^*(e') \setminus S^*(\delta(e'))$ are disjoint.
Thus summing the cost for $S_f$, we have that $w(S_f)$ equals
\begin{align*}
\sum_{e}\sum_{\sigma\in S_f(e)} w(\sigma) &\leq
\sum_{e}\sum_{\sigma\in S^*(e)} w(\sigma) \\
&=
\sum_{e}\sum_{\sigma\in S^*(e) \setminus S^*(\delta(e))} w(\sigma) \;  + \\
&
\; \; \; \; \sum_{e}\sum_{\sigma\in S^*(\delta(e))} w(\sigma), \\
&\leq
w(S^*) + \sum_{e}\sum_{\sigma\in S^*(\delta(e))} w(\sigma),
\end{align*}
where $e$ is summed over $e\in cell(Q_f)$, and the last inequality follows from the disjointness of subareas strictly inside a cell $e$.

Taking the average over $w(S_f)$ for all $f$, we get
\begin{align*}
\frac{1}{m}\sum_{f=0}^{m-1} w(S_f)
&\leq
\frac{1}{m}\sum_{f=0}^{m-1}\left(w(S^*) + \sum_{e\in cell(Q_f)}\sum_{\sigma\in S^*(\delta(e))} w(\sigma)\right)\\
&= w(S^*) + \frac{1}{m}\sum_{f=0}^{m-1}\left(\sum_{e\in cell(Q_f)}\sum_{\sigma\in S^*(\delta(e))} w(\sigma)\right).
\end{align*}
Consider a particular subarea $\sigma\in S^*$.  Note that there are at most $4$ cells $e$ over all shifted instances such that $\sigma\in S^*(\delta(e))$, because $\sigma$ may intersect at most one vertical cell boundary twice and another one horizontal cell boundary twice.  This implies that the second term is at most
\[
\sum_{f=0}^{m-1}\left(\sum_{e\in cell(Q_f)}\sum_{\sigma\in S^*(\delta(e))} w(\sigma)\right)\leq
4w(S^*).
\]
Plugging this inequality into the previous bound, we have that
\[
\frac{1}{m}\sum_{f=0}^{m-1} w(S_f)\leq w(S^*)+\frac{4}{m} w(S^*)=(1+4/m) w(S^*).
\]
Since all $w(S_f)$'s are non-negative, at least one $f$ exists such that $w(S_f)\leq (1+4/m)w(S^*)$, as required.
\end{proof}

\subsection{Running time}
\label{sect:running-time}

From Theorem~\ref{thm:perf}, we need to set $m=4/\epsilon$ to guarantee that the algorithm gives a $(1+\epsilon)$-approximate solution.  For each shifted bounding box $Q_f$ and for each cell $e\in cell(Q_f)$, the dynamic programming algorithm from Section~\ref{sect:dp} runs in time $m\cdot n_e^{O(L)}$ where $n_e$ is the number of subareas covering some target in $e$ and $L$ is the upperbound on the number of sensors needed.

In Section~\ref{sect:proofs}, we show that $L=O(m+k)=O(m)$ since we assume that $k$ is a constant.  Therefore, the dynamic programming algorithm runs in time $n_e^{O(m)}$.  Combining the running time for the dynamic programming for all cells in all shifted bounding boxes, we get that the algorithm runs in time
\[
\sum_f \sum_{e\in cell(Q_f)} m\cdot n_e^{O(m)} = m^2 h^{O(m)} = m^2 n^{O(m)} = (1/\epsilon^2) n^{O(1/\epsilon)},
\]
as claimed.


\section{Bounds on the number of sensors}
\label{sect:proofs}

Our key result needed for bounding the running time of the dynamic programming in Section~\ref{sect:dp} is the following lemma.

\begin{lemma}
\label{lemma:sensor-bound}
For each cell of size $2mr\times 2mr$, the number of sensors in a square of size $\sqrt{r/2}\times \sqrt{r/2}$ in an optimal solution is at most $O(1) + O(k)=O(k)$.  Moreover, for any vertical strip of size $r\times 2mr$ in a cell, there exists an optimal solution that employs at most $O(m+k)$ sensors in this strip.
\end{lemma}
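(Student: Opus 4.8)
Rescale the plane so that $r=1$, write $D(x,\rho)$ for the closed disk of radius $\rho$ about $x$, and extend the paper's notation by $d(x):=\min_{p\in P}d(x,p)$ for any point $x$. First I would pass to a structured optimal solution $S^{*}$: start from any optimal solution, repeatedly delete a removable sensor (since the solution is optimal this never changes its cost, and it terminates at a solution in which every sensor has a \emph{private} target, covered by it alone), then fix an assignment of each target to a covering sensor so each sensor $s$ gets a responsibility set $T_s$ containing its private target, and finally move each $s$ to the point of the convex set $K_s:=\bigcap_{q\in T_s}D(q,1)$ nearest to the station set. None of these moves raises the cost, so $S^{*}$ is still optimal with no removable sensor; moreover every $s\in S^{*}$ with $d(s)>0$ and nearest station $p$ now lies on $\partial D(\tilde q,1)$ for some $\tilde q\in T_s$ at distance $1+d(s)$ from $p$, essentially in the direction $(s-p)/|s-p|$.

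Now fix a square $B$ of side $\sqrt{1/2}$, so $B$ has diameter $1$; let $c$ be its centre. Every target covered by a sensor of $B$ lies in $D(c,3/2)$, and (once $B$ holds at least two sensors) every private target of a sensor of $B$ lies outside $D(c,1/2)$. If $d(c)\ge 2$, I would cover $D(c,3/2)$ by $O(1)$ unit disks and compare $S^{*}$ with the solution that deletes all sensors of $B$ and inserts one sensor at each disk centre: each deleted sensor costs at least $d(c)-\tfrac12$ and each inserted one at most $d(c)+\tfrac32$, so if $B$ contains more than an absolute constant of sensors the modified solution is strictly cheaper --- impossible. If $d(c)<2$, every sensor of $B$ lies within $\tfrac52$ of its nearest station, and only the $\le k$ stations within $3$ of $c$ are relevant; I would then treat, for a fixed relevant station $p$, the \emph{detached} sensors of $B$ nearest $p$ (those with $d(s)\ge\tfrac12$), which lie in the annulus $\tfrac12\le|x-p|\le\tfrac52$. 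Their targets fill a set of bounded diameter, coverable by $O(1)$ unit disks centred within $O(1)$ of $p$, so deleting more than a constant of these detached sensors (each of cost $\ge\tfrac12$) in favour of those $O(1)$ new sensors (each of cost $O(1)$) again contradicts optimality; hence each relevant station has $O(1)$ detached sensors in $B$.

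The \emph{attached} sensors of $B$ nearest $p$ --- those with $\rho_i:=d(s_i)<\tfrac12$ --- are the crux, and the step I expect to be hardest. Write $\hat s_i=(s_i-p)/\rho_i$. By the structure of $S^{*}$, $s_i$ has a responsibility target $\tilde q_i$ at distance $1+\rho_i$ from $p$ in direction $\hat s_i$, and privateness $|\tilde q_i-s_j|>1$ rearranges to
\[
\hat s_i\cdot\hat s_j \;<\; \frac{\rho_i(\rho_i+2)+\rho_j^{2}}{2(\rho_i+1)\rho_j}\;<\;\tfrac54\,\frac{\rho_i}{\rho_j}+\tfrac14 \qquad (\rho_i\le\rho_j<\tfrac12),
\]
so attached sensors of comparable cost have pairwise-separated directions, while a sensor of much larger cost forbids every much-cheaper one from a cone of half-angle near $\pi/2$ about its own direction. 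I would bucket the $\rho_i$ into dyadic scales $[2^{-\ell-1},2^{-\ell})$ and show (i) each scale holds $O(1)$ sensors, by replacing the scale-$\ell$ sensors with one sensor at $p$ plus $O(1)$ sensors at distance $\Theta(2^{-\ell})$ from $p$ that cover the thin shell $1\le|x-p|\le 1+2^{-\ell}$ in wide angular wedges --- this strictly lowers the cost once more than a constant of them are present, and it is essential that the replacement sensors sit at distance $\Theta(2^{-\ell})$ rather than $\Theta(1)$ from $p$, since a naive replacement by a unit-disk cover of a radius-$1$ annulus would not beat a cluster of extremely cheap sensors; and (ii) only $O(1)$ scales are nonempty, since the cone-forbidding property forces each well-separated block of scales to use up at least a half-turn of directions about $p$. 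Hence the number of attached sensors of $B$ nearest $p$ is $O(1)$, so $B$ contains $O(k)$ attached sensors; with the previous paragraph this gives $|S^{*}\cap B|=O(1)+O(k)=O(k)$, the first assertion.

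Finally, for the strip I would tile an $r\times 2mr$ strip (that is, a $1\times 2m$ strip) by $O(m)$ squares of side $\sqrt{1/2}$ and split the sensors of $S^{*}$ in the strip according to whether $d(s)>3$ or $d(s)\le 3$. A sensor with $d(s)>3$ lies in a tiling square whose centre is more than $2$ from every station, which by the second paragraph contains only $O(1)$ sensors; summing over the $O(m)$ squares gives $O(m)$ such sensors. A sensor with $d(s)\le 3$ lies within $3$ of one of the $\le k$ stations whose $3$-neighbourhood meets the strip, and for each such station that neighbourhood is covered by $O(1)$ tiling squares, each of which (by the detached and attached bounds above) holds only $O(1)$ sensors nearest to that station; this gives $O(k)$ such sensors. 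Thus $S^{*}$ uses $O(m)+O(k)=O(m+k)$ sensors in the strip. Everything outside the attached-sensor analysis is routine packing together with the averaging already used for Theorem~\ref{thm:perf}; the delicate point is establishing that a cluster of very cheap sensors near a station can always be re-covered at no greater cost, which requires both the scale-dependent placement of the replacement sensors and the angular-separation inequality above.
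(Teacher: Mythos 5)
Your overall architecture is the same as the paper's: you split the sensors of a small square into those far from every station and those close to some station; you bound the far ones by a delete-and-recover swap against optimality (this is Lemma~\ref{lemma:far-sensor-bound}); you bound the close ones attached to a fixed station by showing each dyadic distance scale holds $O(1)$ sensors (your step~(i) with the shell cover at distance $\Theta(2^{-\ell})$ is exactly the ring argument of Lemma~\ref{lemma:ring-bound}) and that only $O(1)$ scales are occupied, via an angular budget of $2\pi$ around the station (this is the levels-and-covered-intervals argument of Lemma~\ref{lemma:close-sensor-bound}); and you assemble the strip bound by tiling with $O(m)$ squares and charging close sensors to stations rather than to squares. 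All of that is sound and matches the paper. The one place you diverge is the mechanism for ``only $O(1)$ scales'': the paper tracks the angular extent of the region $C_{s'}\setminus(C_p\cup C_s)$ that each level's furthest sensor actually covers at the next radius (Lemmas~\ref{lemma:one-circle-coverage} and~\ref{lemma:additional-angles}), whereas you track pairwise angular separation of \emph{private targets}. That divergence is where your plan has a genuine gap.

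The gap is in the normalization that is supposed to place each attached sensor's responsibility target at distance exactly $1+\rho_i$ from $p$ in the direction $\hat s_i$; the displayed inequality, and hence the cone-forbidding property and the whole step~(ii), is derived from that exact position. Two things go wrong. First, the point of $K_s=\bigcap_{q\in T_s}D(q,1)$ nearest to the stations can be a \emph{vertex} of $K_s$: then no target of $T_s$ lies on the ray from $p$ through $s$; the best the normal-cone condition gives is some $\tilde q$ with $|\tilde q-s|=1$ and $(\tilde q-s)\cdot(s-p)\ge 0$, so $|\tilde q-p|$ can be as small as $\sqrt{1+\rho^2}\approx 1+\rho^2/2$ (a quadratic, not linear, protrusion beyond $D(p,1)$, which also breaks the scale bookkeeping) and the direction of $\tilde q$ seen from $p$ can be nearly $\pi/2$ away from $\hat s$, so the claimed bound on $\hat s_i\cdot\hat s_j$ does not follow from $|\tilde q_i-s_j|>1$. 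Second, after you reposition every sensor inside its own $K_{s_j}$, privateness of $\tilde q_i$ with respect to the \emph{new} positions is not preserved (the new solution is still optimal, so each positive-cost sensor has \emph{some} private target, but not necessarily the extremal one your inequality needs), and your process of re-deleting and re-moving is not shown to terminate in a configuration with both properties at once. The constraint you actually have is only ``there exists $q_i$ with $|q_i-s_i|\le 1$, $|q_i-p|>1$, $|q_i-s_j|>1$ for $j\ne i$,'' and the packing argument has to be redone from that; the paper avoids the issue entirely by reasoning about the angular interval of the region a sensor provably covers at radius $a/2$, rather than about where its private targets sit. You would need either to carry out the weaker-constraint packing argument (separating by target directions and handling targets that barely protrude past $D(p,1)$) or to switch to the paper's covered-interval accounting for this step.
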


The following two lemmas are sufficient to establish Lemma~\ref{lemma:sensor-bound}.  Lemma~\ref{lemma:far-sensor-bound} deals with sensors which are far from any stations, i.e., sensors whose moving distances are at least $r/2$.  Lemma~\ref{lemma:close-sensor-bound} considers sensors which stay close to a station.  We remark that we make no attempt to optimize the constants in the proofs.

\begin{lemma}
For each cell of size $2mr\times 2mr$, the number of sensors in a square $q$ of size $\sqrt{r/2}\times \sqrt{r/2}$ in the optimal solution whose moving distances are at least $r/2$ is at most a constant.
\label{lemma:far-sensor-bound}
\end{lemma}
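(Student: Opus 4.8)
The plan is to argue that if too many "far" sensors (those with moving distance at least $r/2$) were placed inside a tiny square $q$ of side $\sqrt{r/2}$, then we could delete all but a constant number of them and reassign the coverage of the deleted sensors to a single strategically placed replacement sensor, strictly decreasing the total cost and contradicting optimality. The key geometric observation is that $q$ has diameter $\sqrt{r}$, which for $r$ large relative to $1$ is small compared to the sensing radius $r$; more carefully, any two points of $q$ are within distance $\sqrt{r} \le r/2$ of each other once $r \ge 4$ (and for small $r$ the square is tiny in absolute terms, so only $O(1)$ disjoint sensor positions fit anyway — I would handle the small-$r$ regime separately or rescale so that $r$ is not pathologically small). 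Hence any target covered by a sensor in $q$ lies within distance $r + \sqrt{r}$ of every point of $q$.

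First I would fix an optimal solution and suppose for contradiction that $q$ contains sensors $s_1,\dots,s_t$ with $d(s_j)\ge r/2$ for all $j$ and $t$ larger than some absolute constant $c$ to be chosen. Let $T_q$ be the union of the covered-sets of these sensors. Next I would observe that $q$, having area $r/2$, can be covered by a constant number of disks of radius $r/2$ — in fact, since $q$ has diameter $\sqrt r$ and for $r\ge 1$ we have $\sqrt r \le r$, the whole square $q$ together with its surrounding targets can be absorbed: every target in $T_q$ is within distance $r+\sqrt r$ of the center $c_q$ of $q$, so it can be covered by a disk of radius $r$ centered at a point of the segment from $c_q$ toward that target at distance $\sqrt r$ from $c_q$. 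I would then cover $q$'s "reachable target region" by $O(1)$ sensors placed near $c_q$ whose covered-sets together contain $T_q$; this uses a fixed packing/covering argument (e.g., a constant-size grid of candidate sensor positions within distance $\sqrt r$ of $c_q$, each responsible for targets in a $60^\circ$-wide cone, so that a target at distance $\le r$ from such a position is genuinely covered). Each replacement sensor has moving distance at most $d(c_q) + \sqrt r \le d(c_q) + O(1)\cdot r$; but I need the replacement to be cheap, so I would instead bound the moving distance of a replacement sensor near $c_q$ by the moving distance of some original $s_j$ plus $\operatorname{diam}(q) = \sqrt r$. Since each $d(s_j)\ge r/2$, the total original cost of the $t$ sensors is at least $tr/2$, while the $O(1)$ replacements cost at most $O(1)\cdot(d(s_1)+\sqrt r)$; choosing $t$ larger than a suitable constant makes $O(1)\cdot(d(s_1)+\sqrt r) < \sum_j d(s_j)$ whenever $d(s_1)$ is not enormously larger than the others — which I can arrange by applying the argument to the $s_j$ of smallest moving distance and noting all $t$ of them lie in $q$, hence are mutually within $\sqrt r$, hence have comparable-up-to-additive-$\sqrt r$ distances to their nearest stations. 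This yields the contradiction and the constant bound.

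The main obstacle I anticipate is making the replacement step honest: a sensor placed at $c_q$ does not necessarily cover a target that was covered by $s_j\in q$, because the target could be up to $r+\sqrt r$ away from $c_q$. The fix is the cone/offset construction — partition directions into $O(1)$ cones and, for each cone, place one replacement sensor offset from $c_q$ by $\sqrt r$ in that cone's central direction, so that any target covered by some $s_j\in q$ in that cone is within $r$ of the replacement (here one uses that $s_j$ is within $\sqrt r$ of $c_q$ and the angular width is small enough that the $\sqrt r$ offset "gains back" the lost slack). The second delicate point is comparing costs when the $s_j$ have very different moving distances; this is resolved by the observation above that all $s_j$ sit inside a square of diameter $\sqrt r$, so $\max_j d(s_j) \le \min_j d(s_j) + \sqrt r$, and since $\min_j d(s_j)\ge r/2$ the ratio $\max_j d(s_j)/\min_j d(s_j) \le 1 + 2/\sqrt r = O(1)$; thus $\sum_{j=1}^t d(s_j) \ge t\cdot \min_j d(s_j)$ dominates $O(1)\cdot(\min_j d(s_j)+\sqrt r)$ once $t$ exceeds a fixed constant. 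Everything else is routine packing bookkeeping, and the $O(k)$ from nearby stations does not enter here — that is deferred to Lemma~\ref{lemma:close-sensor-bound}.
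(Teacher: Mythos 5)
Your proposal is correct and follows essentially the same route as the paper's proof: both are local exchange arguments that assume many far sensors lie in $q$, replace them all by a constant number of sensors covering the entire target region reachable from $q$ (which has diameter $O(r)$, so each replacement's moving distance exceeds $\min_j d(s_j)\geq r/2$ by only $O(r)$), and derive a contradiction with optimality once the count exceeds a fixed constant. The only difference is the covering gadget --- the paper tiles the reachable region with $21$ squares of diameter at most $r$ and relocates $20$ of the existing sensors at extra cost at most $r$ each, whereas you use a cone decomposition with $O(1)$ offset replacement sensors near the center of $q$ --- and this affects only the unoptimized constant.
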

\begin{proof}
Note that sensors in $q$ can cover targets in adjacent squares (all of size $\sqrt{r/2}\times\sqrt{r/2}$). 
However since the sensing radius is at most $r$, they cannot cover targets that much further.  We draw additional 20 adjacent squares as shown in Figure~\ref{fig:adjacent-squares}.  Note that moving a sensor inside $q$ to any of these squares requires moving it for additional distance of $r$; thus increasing the cost by this much.  However, since every sensor considered in this lemma has already moved for the distance of $r/2$, removing one sensor reduces the cost by $r/2$.

\begin{figure}
  \centering
  \begin{tikzpicture}
    \filldraw[gray!20] (-0.5,-0.5) rectangle (+0.5,+0.5);
    \draw[line width=1.5pt] (-0.5,-0.5) rectangle (+0.5,+0.5);
    \draw (-2.5,-1.5) -- (-2.5,+1.5);
    \draw (-1.5,-2.5) -- (-1.5,+2.5);
    \draw (-0.5,-2.5) -- (-0.5,+2.5);
    \draw (+0.5,-2.5) -- (+0.5,+2.5);
    \draw (+1.5,-2.5) -- (+1.5,+2.5);
    \draw (+2.5,-1.5) -- (+2.5,+1.5);
    \draw (-1.5,-2.5) -- (+1.5,-2.5);
    \draw (-2.5,-1.5) -- (+2.5,-1.5);
    \draw (-2.5,-0.5) -- (+2.5,-0.5);
    \draw (-2.5,+0.5) -- (+2.5,+0.5);
    \draw (-2.5,+1.5) -- (+2.5,+1.5);
    \draw (-1.5,+2.5) -- (+1.5,+2.5);
    \node at (0,0) {$q$};
  \end{tikzpicture}
  \caption{Adjacent squares inside which targets can be covered by sensors in $q$}
  \label{fig:adjacent-squares}
\end{figure}
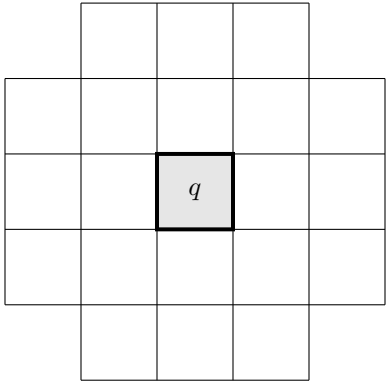

We shall prove the lemma by contradiction.  Let $m_q$ be the number of sensors inside $q$.  Assume that there are at least $63$ sensors in $q$, i.e., $m_q\geq 63$.  Consider another solution where we remove $m_q-21$ sensors, arbitrarily, from $q$, and move another $20$ sensors to all $20$ adjacent squares.  Note that the new schedule covers all previously covered targets.  On the one hand, removing sensors decreases the cost by at least $(m_q-21)\cdot r/2\geq (63-21)\cdot r/2=20r$.  On the other hand, moving sensors to adjacent squares increases the cost by at most $20r$.  Thus, we obtain a cheaper solution and this contradicts the assumption.
\end{proof}

It remains to analyze sensors that are close to the station.   The following Lemma~\ref{lemma:close-sensor-bound} shows that for a station $p\in P$, there exists an optimal solution that schedules at most $O(1)$ sensors within the distance of $r/2$ from $p$.  We again remind that we do not try to optimize the constant in the lemma.

\begin{lemma}
In an optimal solution, the number of sensors whose moving distances are at most $r/2$ from a particular station $p_\ell$ is at most a constant.
\label{lemma:close-sensor-bound}
\end{lemma}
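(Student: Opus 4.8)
We plan to prove this by the same exchange paradigm used for Lemma~\ref{lemma:far-sensor-bound}: starting from an optimal solution that places more than some absolute constant number of sensors within distance $r/2$ of $p_\ell$, we will construct a feasible solution of strictly smaller cost. The new difficulty relative to the ``far'' case is that deleting a close sensor $s$ saves only $d(s)$, which may be arbitrarily small, so one cannot simply delete many sensors and reinsert a constant number of new ones at cost $\Theta(r)$ each. We therefore bucket the close sensors by \emph{distance scale}: for $j\ge 1$ let $C^{(j)}$ be the sensors of the optimal solution with $d(s)\in(r/2^{j+1},r/2^{j}]$, and let $C^{(0)}$ collect those with $d(s)=0$ (an optimal solution needs at most one of the latter, placed at $p_\ell$). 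It then suffices to prove (i) $|C^{(j)}|=O(1)$ for every $j$, and (ii) only $O(1)$ of the sets $C^{(j)}$ are non-empty; together these give $\sum_j |C^{(j)}| + |C^{(0)}| = O(1)$.

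For (i) the only geometric fact we need is that, for every $\delta\le r/2$, the disk $B(p_\ell,r+\delta)$ can be covered by an absolute constant number of radius-$r$ disks whose centres lie within distance $c\delta$ of $p_\ell$ — concretely, one disk centred at $p_\ell$ together with a bounded number of disks placed at distance $\sqrt{3}\,\delta$ along equally spaced directions; a short trigonometric check shows that even a point at distance exactly $r+\delta$ from $p_\ell$ is covered, because the replacement disks overreach it by $\Theta(\delta)$. Given this, if $|C^{(j)}|$ exceeds a suitable constant we delete all of $C^{(j)}$ (saving more than $|C^{(j)}|\cdot r/2^{j+1}$, since each such $d(s)>r/2^{j+1}$) and reinsert the $O(1)$ replacement disks at distance $\Theta(r/2^{j})$ together with a cost-zero disk at $p_\ell$; every target formerly covered by $C^{(j)}$ lies in $B(p_\ell,r+r/2^{j})$, so the configuration stays feasible, while its cost drops by $|C^{(j)}|\cdot r/2^{j+1}-O(r/2^{j})>0$ — a contradiction. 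Hence $|C^{(j)}|=O(1)$.

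For (ii) we argue geometrically that only $O(1)$ scales can be ``in use'' at once. Take an optimal solution of minimum cost and, among those, with the fewest sensors; then every sensor has a \emph{private} target (a target covered by no other sensor of the solution), since otherwise that sensor could be removed. For a close sensor $s\in C^{(j)}$ with private target $y$ we have $d(s)\ge |y-p_\ell|-r$, and (after a further tie-break pinning each sensor to the point closest to $p_\ell$ among positions covering the same targets, so that essentially $s$ lies on the segment $p_\ell y$ — treating sensors with several private targets is the one genuine technicality) we may take $d(s)=|y-p_\ell|-r$. The key claim is then: if $s\in C^{(j)}$ and $s'\in C^{(j')}$ with $j'\ge j+2$ subtend an angle at most $\alpha_0:=\arccos(13/16)$ at $p_\ell$, the radius-$r$ disk around $s$ already contains the private target $y'$ of $s'$, contradicting its privacy. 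This is a direct computation: $|y'-p_\ell|\le r+r/2^{j'}<r+d(s)/2$ and $y'$ lies within angle $\alpha_0$ of $s$'s direction, whence $|s-y'|^2 - r^2 \le d(s)\,r\,(13/8-2\cos\alpha_0)=0$. Consequently, partitioning the directions around $p_\ell$ into $\lceil 2\pi/\alpha_0\rceil=O(1)$ angular sectors of width below $\alpha_0$, inside each sector the occupied scales all lie in a window $\{j_0,j_0+1\}$, so at most two scales are occupied per sector and $O(1)$ overall. Combining (i) and (ii) finishes the proof.

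The step we expect to be the main obstacle is (ii), and in particular the handling of sensors that lie \emph{extremely} close to $p_\ell$: such a sensor can still be indispensable for a target lying just outside $B(p_\ell,r)$, and, precisely because it is so close to $p_\ell$, the set of directions in which it can reach such a target is \emph{not} confined to a narrow cone. The two-level bucketing (by distance scale and by angular sector) together with the merging claim above is designed to tame exactly this phenomenon; pinning down the constants and treating sensors with several private targets is where most of the remaining work lies, though, as with Lemma~\ref{lemma:far-sensor-bound}, we make no attempt to optimise the constants.
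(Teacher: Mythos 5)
Your step (i) is sound and is essentially the paper's Lemma~\ref{lemma:ring-bound}: sensors at distance scale $\Theta(r/2^{j})$ from $p_\ell$ can only help with targets in $B(p_\ell,r+r/2^{j})$, which a constant number of disks at cost $O(r/2^{j})$ can cover outright, so an exchange kills any bucket with more than a constant number of sensors. Your dyadic bucketing likewise mirrors the paper's adaptive ``levels'' $a_{i+1}\leq a_i/2$. The problem is step (ii), and it is exactly where you yourself point: the claim that the private target $y'$ of $s'$ lies within angle $\alpha_0$ of $s$'s direction requires pinning $s'$ onto the ray $p_\ell y'$, which is only possible when $s'$ has a single private target (or all of them in nearly the same direction). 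A sensor at distance $\eps$ from $p_\ell$ covers a lune $C_{s'}\setminus C_{p_\ell}$ that subtends angle $2\arccos(\eps/2r)\to\pi$ at $p_\ell$ as $\eps\to 0$, so its private targets can be spread over almost a half-plane; then neither binning $s'$ by its own direction nor by the direction of one chosen private target lets you conclude that $s$ covers \emph{all} of $s'$'s private targets, and the contradiction never materializes. This is not a corner case to be cleaned up later --- it is the generic hard case, and your two-level bucketing does not tame it.

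The missing idea is a structural statement about such ``wide'' sensors, and it is what the paper supplies in Lemmas~\ref{lemma:two-regions-close} and~\ref{lemma:additional-angles}: if the \emph{additional} coverage region $C_{s'}\setminus(C_p\cup C_s)$ of a deeper-level sensor splits into two angularly separated pieces, both pieces lie within distance $r+a'/2$ of $p_\ell$, so they could be covered by two sensors of total cost at most $a'=d(p,s')$, contradicting optimality (after the appropriate strictness argument in Appendix~\ref{sect:geo-lemmas}). Hence the additional region is connected, subtends at least $\arccos(13/20)$ of new angle at the relevant radius (Lemma~\ref{lemma:one-circle-coverage}), and a potential function on the union of covered angular intervals --- total width increases by a constant or the number of intervals drops --- terminates after $O(1)$ levels. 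To complete your proof you would need to import an exchange argument of this kind for multi-target sensors; without it, step (ii) does not go through.
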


Consider station $p\in P$.  Since it is free to place a sensor at $p$, we assume that all targets within the radius $r$ of $p$ are covered.  We start with an important observation.

\begin{lemma}
  For any distance $0<a\leq r/2$, the number of sensors originating from $p$ whose distance from $p$ is in the range $[a/2,a]$ is at most a constant.
  \label{lemma:ring-bound}
\end{lemma}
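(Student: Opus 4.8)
The plan is a swapping argument: if too many sensors pile up in the annulus $\{x:\ a/2\le d(p,x)\le a\}$ around $p$, we delete all of them at once and re-cover everything they were responsible for using an absolute constant number of cheap sensors, contradicting optimality of the solution.

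First I would record two easy facts. Each sensor counted by the lemma has moved a distance of at least $a/2$, so deleting $N$ of them lowers the total cost by at least $Na/2$. Moreover, any target covered by such a sensor $s$ lies within distance $d(p,s)+r\le a+r$ of $p$; since all targets within distance $r$ of $p$ are already covered by the free sensor at $p$, the only targets for which these annulus sensors matter lie in the thin shell $r<d(p,\cdot)\le r+a$ about $p$.

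The geometric core is the claim that a fixed constant number $C'$ of disks of radius $r$, with centers at distance $2a$ from $p$ placed in equally spaced directions, cover that entire shell. For a center at distance $2a$ from $p$ in direction $\gamma$, a point of the shell at distance $\rho\in(r,r+a]$ and angular offset $\theta$ (as seen from $p$) is covered iff $\cos\theta\ge (\rho^2+4a^2-r^2)/(4a\rho)$. A short computation shows the right-hand side is increasing in $\rho$, and that its value at $\rho=r+a$, namely $(2r+5a)/(4(r+a))$, is increasing in $a$ and hence at most $3/4$ for $a\in(0,r/2]$. So each such disk covers the full radial extent of the shell inside a cone of half-angle at least $\arccos(3/4)>40^\circ$ about $\gamma$, and $C'=8$ of them (with room to spare) cover the shell. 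The choice of radius $2a$ rather than $a$ is the \emph{crucial} point: a sensor at distance exactly $a$ reaches the outer rim of the shell only in its own direction, so no constant number of them would do, whereas distance $2a$ still costs only $2a$ per sensor.

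Putting it together: suppose for contradiction that $N>4C'$. Replace the $N$ annulus sensors by the $C'$ sensors just described (each at distance $2a$ from $p$, hence of cost at most $2a$ — possibly less if some other station is nearer) together with a free sensor at $p$. Every previously covered target is still covered: those within distance $r$ of $p$ by the sensor at $p$, and those in the shell by the $C'$ new sensors; by the second fact above these are the only targets that could have become uncovered. Yet the total cost drops by at least $Na/2-2aC'>0$, contradicting optimality; hence $N\le 4C'=O(1)$. The only step requiring genuine care is the cone-width estimate — in particular verifying that the worst case of $(\rho^2+4a^2-r^2)/(4a\rho)$ over $\rho\in(r,r+a]$ and $a\in(0,r/2]$ is attained at $\rho=r+a$, $a=r/2$ — and, as elsewhere in this section, we make no attempt to optimize the constant $C'$.
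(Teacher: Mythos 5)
Your proof is correct and follows essentially the same route as the paper's: both arguments replace the (supposedly many) annulus sensors by a constant number of sensors placed at distance $2a$ from $p$ that cover the ring $r< d(p,\cdot)\le r+a$, and then compare the cost saved ($\geq Na/2$) against the cost spent ($O(a)$) to contradict optimality. The only difference is that you verify the angular-coverage claim explicitly via the $\cos\theta\ge(\rho^2+4a^2-r^2)/(4a\rho)$ computation, where the paper simply asserts that six sensors on the circle of radius $2a$ suffice; this is a welcome level of detail but not a different argument.
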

\begin{proof}
  Sensors in this range may cover targets $t$ whose distance $d(p,t)$ satisfies $r\leq d(p,t)\leq r+a$.  Call the area in that distance a ring $Ring(r,r+a)$.  Note that we can place $6$ sensors around a circle of radius $2a\leq r$ to cover entirely $Ring(r,r+a)$, paying the cost of $12a$.  Since each sensor whose distance to station $p$ falls in the range $[a/2,a]$ must travel a distance of at least $a/2$, having more than $24$ sensors in this range costs more than this simple feasible solution, leading to a contradiction.
\end{proof}

While the previous lemma is a good step in proving the bound, it can be applied infinitely (as distances can be made smaller and smaller) resulting in an unbounded number of sensors.  To resolve that we need a notion of progress.  We first give an overview for the proof of Lemma~\ref{lemma:close-sensor-bound}.

To prove this lemma, we shall apply Lemma~\ref{lemma:ring-bound} repeatedly to the optimal solution.  If we can show that the number of times we apply the lemma is constant, we know that the total number of sensors is also constant.  Later in this section, we mainly state the required preliminary lemmas; their proofs are deferred to Appendix~\ref{sect:proofs-lemmas}.

Consider the furthest sensor from station $p$ in the optimal solution whose distance from $p$ fall in the range $(0,r/2]$.  Let $a_1$ be the distance and denote that sensor by $s_1$.  Lemma~\ref{lemma:ring-bound} ensures that the number of sensors whose distance from $p$ are in the range of $[a_1/2, a_1]$ is constant.

To apply Lemma~\ref{lemma:ring-bound} again, we look for the furthest sensor which is closer to $p$ than $a_1/2$.  Let $a_2$ be its distance.  We repeat this process, yielding a sequence of distances $a_1, a_2, \ldots$, where
\[
a_{i+1} \leq a_{i}/2.
\]
These distances define a sequence of levels of sensors, where $L_i$ is the set of sensors whose distances from $p$ are in the range $(a_i/2, a_i]$.  

To show that the number of levels is at most a constant, we define the following notion of covered angles to capture the progress.  From the station $p$ located at $(x_p,y_p)$, fix a reference line $\ell$ to be a (half) straight line originating from $(x_p,y_p)$ to $(\infty, y_p)$.  For any target $t$, given $p$ and line $\ell$, we can define the angle $\theta_t$ to be the angle determined by segment $\overline{pt}$ and line $\ell$.  We say that $t$ is {\em at angle} $\theta_t$ and we also refer to $\theta_t$ as the {\em angle} of $t$.  We say that the interval $[\theta_1,\theta_2]$ of angles is {\em covered at radius} $a'$ if every target at angle $\theta\in [\theta_1,\theta_2]$ at distance between $r$ and $a'$ is covered.    

To be precise, we have to describe how to account for covered angles.  When dealing with sensors at level $L_i$, we consider a single sensor $s$ in that level furthest from $p$.  Let $a$ be the distance from $s$ to station $p$.  At this level we consider the interval of angles covered by $p$ at radius $a/2$.  When we consider the next level $L_{i+1}$, we again consider a single sensor $s'\in L_{i+1}$ at furthest distance $a'$ from $p$ and try to account for covered angles at radius $a'/2$.  Since previously covered intervals at radius $a/2$ remain covered at radius $a'/2 < a/2$; we can take all covered intervals from the previous level to this level as well.  We note that while sensors at lower levels $s$ may cover more angles at radius $a'/2$ (for higher level analysis), we keeps the covered intervals fixed when we analyze the additional covered angles; thus the intervals of covered angles act as lowerbounds on the actual angles covered.

Thus, to account for the progress, we maintain a set $S$ of (circular) disjoint intervals of angles and the current guarantee radius $\alpha$, with the condition that any target $t$ with distance at most $\alpha$ from station $p$ at angle $\theta_t$ such that $\theta_t\in I$ for some interval $I\in S$ is covered by the current set of sensors.  
We remark that if this condition holds for some set $S$ with radius $\alpha$, it also holds for any radius $\alpha' < \alpha$.

The following lemma (whose proof is in Appendix~\ref{sect:proofs-lemma-one-circle-coverage}) shows that if we place a sensor at distance $a$ from $p$, it would cover an interval of angles at radius $a'$ of width at least $2\cdot\arccos(13/20)$, where $a' \leq a/2$.  (Also, see Figure~\ref{fig:one-cover-interval}.)

\begin{restatable}{lemma}{onecirclecoverage}
Assume that station $p$ is at $(0,0)$.
If we place sensor $s$ at distance $a$ from $p$, w.l.o.g. at $(a,0)$, where $a\leq r/2$,  the interval 
\[
[-\arccos(13/20),\arccos(13/20)]
\]
is covered at radius $a'$, where $a'\leq a/2$.
\label{lemma:one-circle-coverage}
\end{restatable}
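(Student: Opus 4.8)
We are told station $p$ sits at the origin and the sensor $s$ sits at $(a,0)$ with $a \le r/2$. We want to show that every target $t$ whose distance from $p$ lies in $[r, a']$ — wait, that's impossible since $a' \le a/2 \le r/4 < r$. So "covered at radius $a'$" must actually mean: every target $t$ at angle $\theta_t \in [-\arccos(13/20), \arccos(13/20)]$ with $d(p,t)$ between $r$ and $r + a'$ (matching the $Ring(r, r+a)$ convention from Lemma~\ref{lemma:ring-bound}) is covered either by $s$ or by a sensor placed at $p$ itself. Since targets within distance $r$ of $p$ are covered by the free sensor at $p$, the real content is: a target at distance in $(r, r+a']$ and angle within the stated interval is within distance $r$ of $s$.

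**Plan.** The plan is to fix a target $t$ at distance $\rho \in (r, r+a']$ from $p$ at angle $\theta$ with $|\theta| \le \arccos(13/20)$, and bound $d(s,t)^2$ from above by $r^2$. Write $t = (\rho\cos\theta, \rho\sin\theta)$ and $s = (a,0)$. Then by the law of cosines, $d(s,t)^2 = \rho^2 + a^2 - 2a\rho\cos\theta$. I would treat this as a function of $\rho$ and $\theta$ on the box $\rho \in [r, r+a/2]$ (using $a' \le a/2$, and monotonicity will let me push $\rho$ to its extreme) and $\cos\theta \ge 13/20$. The worst case is the largest $\rho = r + a/2$ and the smallest $\cos\theta = 13/20$; at that point $d(s,t)^2 = (r+a/2)^2 + a^2 - 2a(r+a/2)(13/20)$. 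Expanding: $r^2 + ar + a^2/4 + a^2 - \frac{13}{20}(2ar + a^2) = r^2 + ar + \frac{5}{4}a^2 - \frac{13}{10}ar - \frac{13}{20}a^2 = r^2 - \frac{3}{10}ar + \frac{5}{4}a^2 - \frac{13}{20}a^2 = r^2 - \frac{3}{10}ar + \frac{3}{5}a^2$. Since $a \le r/2$, we have $\frac{3}{5}a^2 \le \frac{3}{5}\cdot\frac{a}{2}\cdot r = \frac{3}{10}ar$, so $-\frac{3}{10}ar + \frac{3}{5}a^2 \le 0$, giving $d(s,t)^2 \le r^2$. So the constant $13/20$ is exactly tuned so that this cancellation works when $a = r/2$. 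That is the whole computation; I would present it cleanly rather than as an optimization-over-a-box argument, by just directly verifying the inequality $\rho^2 + a^2 - 2a\rho\cos\theta \le r^2$ under the hypotheses $r < \rho \le r + a/2$, $\cos\theta \ge 13/20$, $0 < a \le r/2$, noting the left side is increasing in $\rho$ (since $\rho \ge r > a\cos\theta$) and decreasing in $\cos\theta$.

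**Main obstacle.** The only subtlety is getting the semantics of "covered at radius $a'$" right — specifically reconciling the $[r, a']$-type language in the definitions with the $Ring(r, r+a)$ usage, and confirming that $a'$ enters only through the harmless bound $a' \le a/2$ (so that $\rho \le r + a' \le r + a/2$). Once that is pinned down, the proof is a short trigonometric estimate with no real obstacle; the $13/20$ is reverse-engineered precisely so the $a^2$ and $ar$ terms balance at the extreme $a = r/2$. I would also remark that this reproves, with the explicit angular width $2\arccos(13/20)$, the qualitative claim underlying the finitely-many-levels argument for Lemma~\ref{lemma:close-sensor-bound}.
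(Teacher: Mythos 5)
Your proof is correct and is essentially the same argument as the paper's: both reduce to the worst case $a=r/2$, $a'=a/2$ and perform the law-of-cosines computation on the triangle with sides $r+a/2$, $r$, and $a$, which is exactly where the constant $13/20$ comes from. The only difference is direction — the paper solves for the angle of the extreme boundary point $v$ where $C_s$ meets the circle of radius $r+a'$ and reads off $\cos\theta=13/20$, while you verify $d(s,t)^2=\rho^2+a^2-2a\rho\cos\theta\le r^2$ for every target in the claimed sector via monotonicity in $\rho$ and $\cos\theta$ — which is, if anything, a slightly more complete rendering of the same calculation (and your reading of ``covered at radius $a'$'' as distance in $[r,r+a']$ matches the paper's usage).
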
 

Lemma~\ref{lemma:one-circle-coverage} essentially allows us to make progress from one level to the next.  However, we need more structures to ensure that when we have circles from many levels.  
Consider two sensors $s$ and $s'$ at two consecutive levels (i.e., $s\in L_i$ and $s'\in L_{i+1}$).  
Assume that $s$ is at distance $a$ from the station $p$ where $a\leq r/2$; this also implies that $s'$ is at distance $a'\leq a/2$ from $p$.
We consider the regions covered additional by $s'$.  Formally, let circles $C_p, C_s,$ and $C_{s'}$ denoted circles of radius $r$ centered at $p,s,$ and $s'$ respectively.  The regions covered additionally by $s'$ is $C_{s'}\setminus (C_p\cup C_s)$, shown in Figure~\ref{fig:two-regions-close}.  The following lemma, regarding these regions, is crucial. 

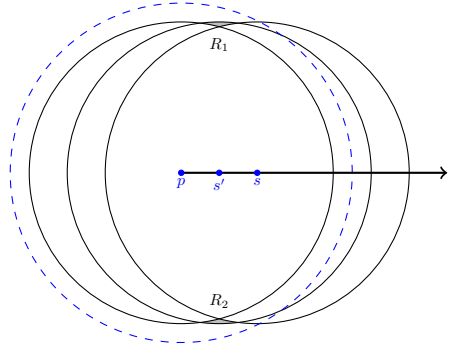
\begin{figure}
\centering
\begin{tikzpicture}[scale=0.5, every node/.style={scale=0.6}]
    \coordinate (C) at (1,0);

    \fill[color=lightgray] (C) circle (4);
    \fill[color=white] (0,0) circle (4);
    \fill[color=white] (2,0) circle (4);

    \draw[black, thin] (0,0) circle (4);
    \draw[black, thin] (2,0) circle (4);
    \draw[black] (C) circle (4);

    \draw[blue, thin, dashed] (0,0) circle (4.5);
        
    \draw[black,thick,->] (0,0) -- (7,0);
    \filldraw [blue] (0,0) circle (2pt) node[anchor=north] {$p$};
    \filldraw [blue] (2,0) circle (2pt) node[anchor=north] {$s$};
    \filldraw [blue] (C) circle (2pt) node[anchor=north] {$s'$};

    \node at (1,3.4) {$R_1$};
    \node at (1,-3.4) {$R_2$};
\end{tikzpicture}
    
\caption{Two regions $R_1$ and $R_2$ in $C_{s'}\setminus (C_p\cup C_s)$. Blue dashed circle shows the distance $r+a'/2$.}
\label{fig:two-regions-close}
\end{figure}

\begin{restatable}{lemma}{tworegionsclose}
Let $s$ and $s'$ be sensors at levels $L_i$ and $L_{i+1}$ with distances $a$ and $a'$ from station $p$, respectively, where $a'\leq a/2\leq r/4$.
Consider $C_{s'}\setminus (C_p\cup C_s)$.  If the set contains two disjoint regions $R_1$ and $R_2$, the further points in both regions is at distance at most $r+a'/2$ from station $p$.
\label{lemma:two-regions-close}
\end{restatable}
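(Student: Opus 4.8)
The claim is about the geometry of three disks of radius $r$: $C_p$ centered at the station $p=(0,0)$, $C_s$ centered at $s=(a,0)$, and $C_{s'}$ centered at $s'$ with $|ps'|=a'\le a/2\le r/4$. We study the set $C_{s'}\setminus(C_p\cup C_s)$, i.e.\ the portion of $s'$'s disk not already covered by the other two. The lemma asserts: if this leftover set splits into two disjoint pieces $R_1,R_2$, then every point of $R_1\cup R_2$ lies within distance $r+a'/2$ of $p$. Geometrically, the two ``lune'' pieces can only appear near where the boundary circle $\partial C_{s'}$ pokes out of $C_p$, and since $s'$ is very close to $p$ (distance $a'\le r/4$), that can only happen in a thin annular band of outer radius just barely more than $r$. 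I would make ``just barely'' quantitative as $r+a'/2$.

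**Plan of proof.**
First I would set up coordinates with $p$ at the origin and the reference line along the positive $x$-axis, writing $s'=(a'\cos\phi, a'\sin\phi)$ for some angle $\phi$. A point $x\in C_{s'}\setminus C_p$ satisfies $|x|>r$ and $|x-s'|\le r$; by the triangle inequality $|x|\le|x-s'|+|s'|\le r+a'$, so immediately every leftover point is within $r+a'$ of $p$. That is the easy half of the bound but gives the wrong constant; the work is to improve $r+a'$ to $r+a'/2$ using the hypothesis that the leftover set is disconnected by $C_s$ as well. So the second step is to characterize when $C_{s'}\setminus(C_p\cup C_s)$ has two components: the set $C_{s'}\setminus C_p$ is a single lune (a connected region bounded by arcs of $\partial C_{s'}$ and $\partial C_p$, nonempty whenever $a'>0$), and removing $C_s$ can only disconnect it if the circle $\partial C_s$ cuts across this lune in a way that separates it into the two pieces $R_1$ (``above'') and $R_2$ (``below''). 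I would argue that in that separating configuration, the far extremes of $R_1$ and $R_2$ — the points of $C_{s'}\setminus C_p$ farthest from $p$ — must be ``pinched'' toward the $x$-axis region near where $\partial C_{s'}$, $\partial C_p$ meet, rather than near the point of $\partial C_{s'}$ diametrically opposite $p$'s direction.

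**Making the constant work.**
The quantitative heart is this: parametrize the boundary intersection. The circles $\partial C_p$ and $\partial C_{s'}$ intersect along the perpendicular bisector of $p s'$; the leftover lune $C_{s'}\setminus C_p$ consists of points whose distance from $p$ exceeds $r$. For a point $x$ on $\partial C_{s'}$ with $|x|>r$, write $|x|=r+\delta$ with $0<\delta\le a'$. I would show that the two regions $R_1,R_2$ each abut the far boundary $\partial C_{s'}$, and that the only way $C_s$ (a disk of radius $r$ whose center $s$ is at distance $a\ge 2a'$ from $p$, roughly in the positive-$x$ direction from $s'$) both (i) misses some far-out points of the lune and (ii) separates them into two components, is if those surviving far points are confined to the portion of the lune where $\delta\le a'/2$. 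Concretely, I would pick the farthest point $x^\*$ of $R_1\cup R_2$ from $p$, note $x^\*\in\partial C_{s'}\setminus C_p$ so $x^\*$ lies on the arc of $\partial C_{s'}$ outside $C_p$, and then use that $x^\*\notin C_s$ forces $x^\*$ to be on the side of $\partial C_s$ away from $s$; combining $|x^\* - s'|=r$, $|x^\* - s|\ge r$, $|s-s'|\le a+a'$ (actually we can take $s-s'$ to have length roughly $a$ in the $+x$ direction) with the constraint that $x^\*$ is simultaneously near both the $\partial C_p$ arc (to be at the component boundary) gives a system whose solution I expect to reduce, after the dust settles, to $|x^\*|\le r+a'/2$. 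I would write the key inequality as a comparison: along $\partial C_{s'}$, the distance-from-$p$ function is maximized diametrically opposite the $p$-direction, and the separating disk $C_s$ eats exactly that maximal cap, leaving only the two side lobes where the distance has dropped from $r+a'$ to at most $r+a'/2$ — the factor $1/2$ coming from the angular width estimate $\arccos$-type bound analogous to Lemma~\ref{lemma:one-circle-coverage}, evaluated at the geometry $a'\le a/2$.

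**Main obstacle.**
The delicate point is step two: rigorously establishing that disconnectedness of $C_{s'}\setminus(C_p\cup C_s)$ implies the two surviving components sit in the low-$\delta$ (near-$\partial C_p$) part of the lune, rather than, say, $C_s$ taking a bite out of the middle and leaving both a far cap and near-$\partial C_p$ pieces. I expect one must use the specific near-collinearity forced by $a'\le a/2\le r/4$ — the centers $p,s',s$ are ``almost in a line'' in the sense that $s'$ and $s$ both lie well within $C_p$-adjacent range and $s$ is the farther one — to rule out a middle bite and pin down that the only separating cut is the one through the far cap. I would handle this by a monotonicity argument: show that as one moves along $\partial C_{s'}$ away from the $+x$-like direction of $s$, membership in $C_s$ is lost monotonically, so $C_{s'}\setminus C_s$ is a single arc-bounded region, and hence the only way to get two components of $C_{s'}\setminus(C_p\cup C_s)$ is for $\partial C_p$ to slice that single region into two — which forces both pieces to touch $\partial C_p$, i.e.\ to have points with $|x|=r$ on their boundary, confining their extent. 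Once that structural fact is in hand, bounding the farthest point is a one-variable trigonometric estimate that I would not grind through here but expect to close at exactly $r+a'/2$ given the $a'\le a/2$ slack.
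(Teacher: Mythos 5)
Your proposal correctly identifies the trivial bound $r+a'$ and sets up the right topological picture, but it has a genuine gap at exactly the point where the lemma's content lives: the derivation of the constant $a'/2$. You defer this as ``a one-variable trigonometric estimate that I would not grind through here but expect to close at exactly $r+a'/2$,'' yet nothing in the structural scaffolding you build actually delivers it. In particular, the observation that disconnectedness forces both components to touch $\partial C_p$ (i.e., to contain boundary points at distance exactly $r$ from $p$) does \emph{not} ``confine their extent'': a lobe can touch $\partial C_p$ and still reach out to distance nearly $r+a'$, so that step is a non sequitur. Likewise, the assertion that ``the separating disk $C_s$ eats exactly the maximal cap, leaving only the two side lobes where the distance has dropped to at most $r+a'/2$'' is a restatement of the conclusion rather than an argument for it; a priori $C_s$ could cut the crescent with a band that misses the far cap, and ruling that out is precisely what must be proved.

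The paper closes this gap with a mechanism your proposal never invokes: it exploits the fact that $s'$ itself is at distance only $a'$ from $p$ while having to be within distance $r$ of a witness point in \emph{each} of the two separated regions. Concretely, it assumes for contradiction a point $t_1\in R_1$ at distance at least $r+a'/2$ from $p$, takes a witness $t_2\in R_2$ (normalized to the intersection of $\partial C_p$ and $\partial C_s$ on the opposite side of the line through $p$ and $s$), and then computes that any point within distance $r$ of both $t_1$ and $t_2$ has $x$-coordinate greater than $2\cdot(a'/2)=a'$, hence is too far from $p$ to be $s'$. This ``two witnesses pin down $s'$'' step is the load-bearing idea, and it is absent from your outline: you try to constrain the regions directly from the position of $s'$, rather than constraining $s'$ from the existence of the two regions. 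To salvage your route you would need both to justify the structural claim about which part of the crescent $C_s$ removes and to carry out the skipped estimate; as written, the argument establishes nothing better than $r+a'$.
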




The proof of Lemma~\ref{lemma:two-regions-close} appears in Appendix~\ref{sect:geo-lemmas}.  
Given Lemma~\ref{lemma:two-regions-close}, we can argue that if $C_{s'}\setminus (C_p \cup C_s)$ contains two regions, it is cheaper to place two sensor at distance strictly less than $a'/2$ from $p$; as the cost would be $2\cdot a'/2 < a'$; contradicting the optimality of the optimal solution.  
Therefore, the additional covering area of $C_{s'}$ contains only one region.  In that case, we can definitely rotate $C_{s'}$ to maximize the additional covering angle at distance $a'/2$ to be at least $\arccos (13/20)$, half of the angle guaranteed by Lemma~\ref{lemma:one-circle-coverage}.  

Figure~\ref{fig:two-cover-intervals} shows additional angles covered by $s'$.

    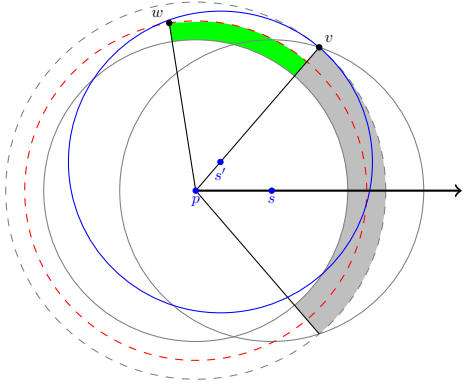
\begin{figure}
    \centering
    \begin{tikzpicture}[scale=0.5, every node/.style={scale=0.6}]
        \coordinate (C) at (0.65, 0.7599342076785331);

        \begin{scope}
            \clip (0,0) circle (4.5);
            \clip (0,0) -- (-0.6975, 4.4456) -- (13,5) -- (13,0) -- (0,0);
            \fill[color=green] (-1,0) rectangle (10,10);
        \end{scope}

        \begin{scope}
            \clip (0,0) circle (5);
            \clip (0,0) -- (6.5,7.6) -- (13,0) -- (6.5,-7.6) -- (0,0);
            \fill[color=lightgray] (0,4) rectangle (10,-4);
        \end{scope}

        \fill[color=white] (0,0) circle (4);

        \draw[gray] (0,0) circle (4);
        \draw[gray] (2,0) circle (4);
        \draw[gray, dashed] (0,0) circle (5);
        \draw[red, dashed] (0,0) circle (4.5);

        \draw[blue] (C) circle (4);

        \draw[black,thick,->] (0,0) -- (7,0);
        \draw[black] (0,0) -- (3.25,3.8);
        \draw[black] (0,0) -- (3.25,-3.8);
        \draw[black] (0,0) -- (-0.6975, 4.4456);

        \filldraw [blue] (0,0) circle (2pt) node[anchor=north] {$p$};
        \filldraw [blue] (2,0) circle (2pt) node[anchor=north] {$s$};
        \filldraw [blue] (C) circle (2pt) node[anchor=north] {$s'$};
        \filldraw [black] (3.25, 3.8) circle (2pt) node[anchor=south west] {$v$};
        \filldraw [black] (-0.6975, 4.4456) circle (2pt) node[anchor=south east] {$w$};
        
    \end{tikzpicture}
    
    \caption{
    How sensor $s'$ (level $L_{i+1}$) at distance $a'=a/2$ from station $p$ covers additional angle interval $[2\theta,\theta]$ at radius $a'/2$ (shown in green), where $\theta = \arccos(13/20)$.  
    }
    \label{fig:two-cover-intervals}
    \end{figure}

Let us state the formal statement of the discussion above regarding two sensors at two consecutive levels.  Its proof appears in Appendix~\ref{sect:proof-of-lemma-additional-angles}.

\begin{restatable}{lemma}{additionalangles}
Consider two sensor $s$ and $s'$ at consecutive levels $L_i$ and $L_{i+1}$.  Let $a$ and $a'$ be the distances from $s$ and $s'$ to $p$.  Let the interval of angles covered by $s$ at radius $a/2$ be $I$.  The additional angle (w.r.t. $I$) covered by $s'$ at radius $a'/2$ is at least $\arccos (13/20)$.
\label{lemma:additional-angles}
\end{restatable}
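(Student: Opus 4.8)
Write $\theta=\arccos(13/20)$, place $p$ at the origin, and take $\overrightarrow{ps'}$ to be the positive $x$-axis. First I would apply Lemma~\ref{lemma:one-circle-coverage} to the sensor $s'$ (legitimate since $a'\le a/2\le r/4\le r/2$), taking the covering radius to be $a'/2$: this yields that $s'$ covers the angular interval $J:=[-\theta,\theta]$ at radius $a'/2$, i.e., every target whose angle lies in $J$ and whose distance from $p$ lies in $[r,\,r+a'/2]$ lies in $C_{s'}$. Since $[r,r+a'/2]\subseteq[r,r+a/2]$, any direction in $I$ is a direction along which the whole range $[r,r+a'/2]$ is already handled by $s$. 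It therefore suffices to exhibit a sub-interval of $J$, of length at least $\theta$, that is disjoint from $I$.

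The second step is to show that $C_{s'}\setminus(C_p\cup C_s)$ is connected, using Lemma~\ref{lemma:two-regions-close}. If this set had two components $R_1,R_2$, then by Lemma~\ref{lemma:two-regions-close} every point of $R_1\cup R_2$ is within distance $r+a'/2$ of $p$; combined with $R_1\cup R_2\subseteq C_{s'}$, each $R_j$ can then be covered by a single disk of radius $r$ whose centre lies at distance strictly less than $a'/2$ from $p$ (put the centre on the ray from $p$ through a farthest point of $R_j$, pulled slightly inside radius $a'/2$). Replacing $s'$ by these two sensors leaves the schedule feasible — the targets of $C_{s'}$ lying in $C_p\cup C_s$ are still covered by $p$ and $s$, and the remaining ones lie in $R_1\cup R_2$ — but changes the movement cost by at most $2\cdot(a'/2)-a'<0$, using that $s$ (hence $s'$) originates from $p$ so that $w(s')=a'$. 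This contradicts optimality, so $C_{s'}\setminus(C_p\cup C_s)$ is a single region.

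Third, I would use connectedness to localise the interference of $C_s$. At radius $a'/2$ the disk $C_p$ subtracts nothing (it covers only distances up to $r$), so the additional coverage of $s'$ is the part of the lune $C_{s'}\setminus C_p$ not swallowed by $C_s$. If $C_s$ contained the ``far tip'' of this lune — the point at angle $0$ (the direction $\overrightarrow{ps'}$) and distance $r+a'/2$ from $p$ — then $C_s$ would meet the middle of the lune's outer boundary while the two horns of the lune, which lie at angles close to $\pm 90^\circ$ from $\overrightarrow{ps'}$, remain outside $C_s$, so removing $C_s$ would split the lune, contradicting connectedness. Hence that far tip is not in $C_s$, so the direction $\overrightarrow{ps'}$ is not covered by $s$ at radius $a'/2$, and hence also not at radius $a/2$ (a larger range), i.e., $\overrightarrow{ps'}=0\notin I$. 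Since $I$ is a single angular interval (of width at most $120^\circ$) not containing $0$, and $J=[-\theta,\theta]$ is centred at $0$, the intersection $I\cap J$ lies entirely on one side of $0$; consequently the other half of $J$ — either $[-\theta,0]$ or $[0,\theta]$, of length $\theta$ — is disjoint from $I$ and is covered by $s'$ at radius $a'/2$, which proves the bound.

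The main obstacle is the geometric claim in the third paragraph: that connectedness of $C_{s'}\setminus(C_p\cup C_s)$ really forbids $C_s$ from reaching across $s'$'s lune (equivalently, forces enough angular separation between $\overrightarrow{ps}$ and $\overrightarrow{ps'}$), uniformly over all admissible radii $a\in(0,r/2]$ and $a'\in(0,a/2]$ — including tracking whether a horn of the lune might itself fall inside $C_s$ when $\overrightarrow{ps}$ is far from $\overrightarrow{ps'}$. This needs the same circle-intersection bookkeeping as the proof of Lemma~\ref{lemma:two-regions-close}, carried out at the ``radius $a'/2$'' scale, together with the explicit description of $I$ as $\{\phi:\cos\phi\ge(r+5a/4)/(2r+a)\}$; the constant $13/20$ is chosen exactly so that the extremal configuration $a=r/2$, $a'=a/2=r/4$ with $\overrightarrow{ps'}$ at angle $\theta$ from $\overrightarrow{ps}$ (the configuration of Figure~\ref{fig:two-cover-intervals}) yields a surviving interval of width precisely $\theta$. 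A minor additional point is the strict inequality ``$<a'/2$'' used in the exchange argument, which should follow from a strict form of Lemma~\ref{lemma:two-regions-close} or a small perturbation.
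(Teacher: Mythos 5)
Your first two steps coincide with the paper's proof: the paper also applies Lemma~\ref{lemma:one-circle-coverage} to $s'$ and establishes that $C_{s'}\setminus(C_p\cup C_s)$ is a single region via Lemma~\ref{lemma:two-regions-close} together with the replace-$s'$-by-two-cheaper-sensors exchange. The divergence, and the genuine gap, is your third step, which you yourself flagged as the obstacle: connectedness of $C_{s'}\setminus(C_p\cup C_s)$ does \emph{not} imply that the direction $\overrightarrow{ps'}$ is uncovered by $s$. Concretely, take $a=r/2$, $a'=r/4$, and place $s$ at angle $40^\circ$ from $\overrightarrow{ps'}$. A direct computation shows that $C_s$ contains the far tip of $s'$'s lune and in fact the entire half of the lune on the $s$-side of the ray $\overrightarrow{ps'}$ (every point of the outer arc of the lune between the tip and the near horn is within $r$ of $s$), so $C_{s'}\setminus(C_p\cup C_s)$ is one connected piece lying entirely on the other side; yet $I$ has half-width $\arccos\bigl(\tfrac{r+5a/4}{2r+a}\bigr)=\arccos(13/20)\approx 49.5^\circ>40^\circ$ and therefore contains the direction $0$. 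In that fixed configuration the additional angle of $s'$ is only about $54.3^\circ-9.5^\circ\approx 45^\circ<\arccos(13/20)$, so the fixed-position statement you are trying to prove is not a consequence of connectedness alone --- it is actually false. (A smaller slip feeding into this: the point at distance $r+a'/2$ at angle $0$ is not the tip of the lune, which sits at distance $r+a'$; $C_s$ can contain the former without reaching the lune's outer boundary, leaving the residual region connected through the outer sliver, so even the disconnection claim as you state it fails.)

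The paper closes this case by a rotation/optimality argument rather than a direct geometric one: if the additional angle were less than $\arccos(13/20)$, then the intersection point $v$ of $C_{s'}$ and $C_s$ outside $C_p$ lies at distance less than $r+a'$ from $p$, and $s'$ can be rotated about $p$ along the circle of radius $a'$ --- at no change in movement cost --- so as to strictly enlarge the region covered outside $C_p\cup C_s$; the lemma is then invoked in the proof of Lemma~\ref{lemma:close-sensor-bound} precisely for such a repositioned sensor (``we can move $s_i$ along the circle of radius $a_i$ so that \ldots''). If you want to keep your decomposition, you must build in that normalization: rotate $s'$ away from $s$ until $v$ reaches distance $r+a'$ from $p$, and only then argue that the half of $J$ on the far side of $v$ is disjoint from $I$. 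Without the rotation, no amount of circle-intersection bookkeeping will rescue the step, because the counterexample above satisfies every hypothesis you are using.
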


We are ready to prove our key Lemma~\ref{lemma:close-sensor-bound}

\begin{proof} (of Lemma~\ref{lemma:close-sensor-bound})

We show that the number of levels is constant.  In our analysis we maintain a set $S$ of disjoint covered angle intervals and the radius $b$.  

Recall that $a_1,a_2,\ldots,$ are the furthest distances of sensors $s_1,s_2,\ldots,$ at levels $L_1,L_2,\ldots$ from station $p$.  Initially, starting at level $L_1$, considering the furthest sensor $s_1\in L_1$, from Lemma~\ref{lemma:one-circle-coverage}, we have that at radius $b=a_1/2$, $s_1$ covers the interval of width $2\cdot\arccos(13/20)$.
Thus, we have that $S$ contains a single interval of width $2\cdot\arccos(13/20)$ of covered angles at radius $b=a_1/2$.  

As we consider sensors at higher levels, we make progress by either increase the total width of the intervals or decrease the number of intervals.  Intuitively, we show that while iterating over these levels, one of these events occurs: 
\begin{itemize}
    \item[(1)] when considering a sensor $s_i$ in level $L_i$ with distance $a_i$, the sum of the width of the covered intervals at radius $a_i/2$ increases by at least a constant (more precisely at least $\arccos (13/20)\approx 0.863$), or
    \item[(2)] if $s_i$ does not satisfy the previous property, it means that $s_i$ is ``between'' two other sensors from the lower levels which are well-separated; in this case the show that the number of covered intervals decreases.
\end{itemize}

Consider sensor $s_i$ at distance $a_i$ at level $L_i$.  We consider the intervals of covered angles at radius $b=a_i/2$.  As discussed before every target whose angle is in any interval in $S$ is covered at the new lower radius $b$.  Let interval $J_i$ be the interval of angles covered by $s_i$ at radius $a_i/2$.  Again from Lemma~\ref{lemma:one-circle-coverage}, we have that $J_i$ is of width $2\cdot\arccos(13/20)$.   

We would like to add $J_i$ to $S$.  If $J_i$ only intersects at most one interval $J'$ already in $S$, we know that 
either the additional angle is at least $\arccos(13/20)$, or from Lemma~\ref{lemma:additional-angles}, we can move $s_i$ along the circle of radius $a_i$ so that the additional angle after combining $J_i$ with $J'$ is at least $\arccos(13/20)$; resulting in event (1).  The only case this is not true is when after moving $s_i$, $J_i$ intersects two intervals; thus, adding $J_i$ decreases the number of intervals in $S$ and event (2) occurs.

Since the number of intervals cannot be less than 1, and each time event (1) occurs we increase the total width by a constant and introduce at most one new interval, we know that both events occurs for a constant number of times as well, because the total width is at most $2\pi$.  (To be precise, the number of levels is at most $\lceil 2\cdot 2\pi/\arccos(13/20)\rceil=15$.)  This implies that the number of levels we encounter is at most constant, yielding the lemma.
\end{proof}

\section{Conclusions}

We present a polynomial-time approximation scheme for finding a $(1+\epsilon)$-approximate solution running in time $n^{O(1/\epsilon)}$ for this problem when $k$, the number of base stations, is constant.  This improves, exponentially, over the PTAS presented by Gao~{\em et al.}~\cite{GaoCWC-journalTNET} that runs in time $n^{O(1/\epsilon^2)}$.
We prove a stronger bound on the number of sensors in any unit area in the optimal solution and employ a more refined dynamic programming algorithm whose complexity depends only on the width of the problem.
We hope that the dynamic programming approach presented here might find more applications in other networking problems.




\bibliographystyle{alpha}
\bibliography{coverage}

\appendix


\section{A Counter Example to the Sensor Bound in Gao~{\em et al.}\cite{GaoCWC-journalTNET}}
\label{sect:counter}

The algorithm proposed by Gao~{\em et al.}~\cite{GaoCWC-journalTNET} (first appeared as~\cite{ChenGWC_INFOCOM16}) relies on the fact that the number of sensors in the optimal solution required for each cell $e$ of size $2mr\times 2mr$ is at most $O(m^2)$.  Note that this bound is independent of $k$, the number of stations.  The proof presented in the paper uses the fact that there exists a feasible solution with that many sensors.  However, the actual cost of $k$-MMTC is the total distance, not the number of sensors.
In this section, we provide a counter example to the sensor bound in Theorem 5 in~\cite{GaoCWC-journalTNET} that requires $k$ sensors.

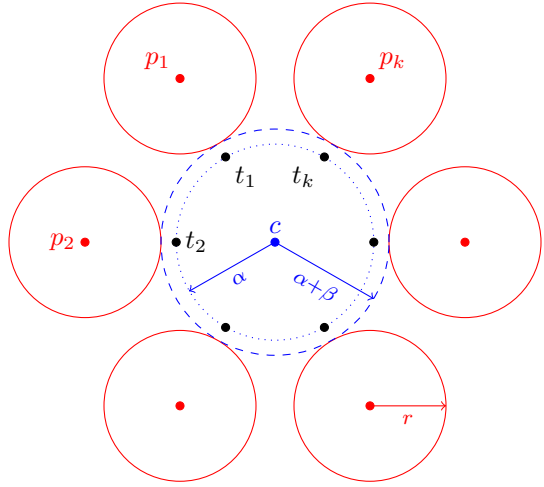
\begin{figure}
  \centering
    \begin{tikzpicture}
    \coordinate (C) at (0,0);
    \coordinate (T1) at (-0.65,+1.13);
    \coordinate (T2) at (-1.30,0);
    \coordinate (T3) at (-0.65,-1.13);
    \coordinate (T4) at (+0.65,-1.13);
    \coordinate (T5) at (+1.30,0);
    \coordinate (T6) at (+0.65,+1.13);
    \coordinate (P1) at (-1.25,+2.17);
    \coordinate (P2) at (-2.50,0);
    \coordinate (P3) at (-1.25,-2.17);
    \coordinate (P4) at (+1.25,-2.17);
    \coordinate (P5) at (+2.50,0);
    \coordinate (P6) at (+1.25,+2.17);
    \draw[red] (P1) circle (1);
    \draw[red] (P2) circle (1);
    \draw[red] (P3) circle (1);
    \draw[red] (P4) circle (1);
    \draw[red] (P5) circle (1);
    \draw[red] (P6) circle (1);
    \draw[blue, dotted] (C) circle (1.3);
    \draw[blue, dashed] (C) circle (1.5);
    \draw[->, blue] (C) -- node[anchor=north, rotate=+30] {\footnotesize $\alpha$} (-1.13,-0.65);
    \draw[->, blue] (C) -- node[anchor=north, rotate=-30] {\footnotesize $\alpha{+}\beta$} (+1.30,-0.75);
    \draw[->, red] (P4) -- node[anchor=north] {\footnotesize $r$} (+2.25,-2.17);
    \filldraw[blue] (C) circle (1.5pt) node[anchor=south] {$c$};
    \filldraw[black] (T1) circle (1.5pt) node[anchor=north west] {$t_1$};
    \filldraw[black] (T2) circle (1.5pt) node[anchor=west] {$t_2$};
    \filldraw[black] (T3) circle (1.5pt);
    \filldraw[black] (T4) circle (1.5pt);
    \filldraw[black] (T5) circle (1.5pt);
    \filldraw[black] (T6) circle (1.5pt) node[anchor=north east] {$t_k$};
    \filldraw[red] (P1) circle (1.5pt) node[anchor=south east] {$p_1$};
    \filldraw[red] (P2) circle (1.5pt) node[anchor=east] {$p_2$};
    \filldraw[red] (P3) circle (1.5pt);
    \filldraw[red] (P4) circle (1.5pt);
    \filldraw[red] (P5) circle (1.5pt);
    \filldraw[red] (P6) circle (1.5pt) node[anchor=south west] {$p_k$};
  \end{tikzpicture}
  \caption{A counter example that requires $k$ sensors.}
  \label{fig:chen-counter}
\end{figure}

We assume that $m$ is large enough, which is true when the error requirement $\epsilon$ is required to be small enough as $m=\Omega(1/\epsilon)$.  Let $c$ be the point at the center of cell $e$.  For any $\alpha>0$ and very small $\beta>0$, place $k$ targets uniformly at distance $\alpha$ from $c$ and for each target $t_i$ place a station $p_i$ at distance $r+\alpha+\beta$ from $c$ in the same direction from $c$ as $t_i$.  Figure~\ref{fig:chen-counter} illustrates this example.  Consider a solution that for each target $t_i$ we move a sensor from $p_i$ with distance $\beta$ to cover it, resulting in the cost of $k\beta$ with $k$ sensors.  If one wants to use smaller number of sensors, some sensor is required to moved further with distance that depends on $\alpha$ and $k$; thus we can always choose $\beta$ to be small enough so that that extra distance is larger than $k\beta$.  This ensures that the optimal solution needs at least $k$ sensors as claimed. 

\section{Proofs of lemmas}
\label{sect:proofs-lemmas}

\subsection{Proof of Lemma~\ref{lemma:one-circle-coverage}}
\label{sect:proofs-lemma-one-circle-coverage}

\onecirclecoverage*

\begin{proof}

    Let $s$ the sensor at distance $a$ from $p$.  Consider two circles $C_p$ and $C_s$ centered at $p$ and $s$ with radius $r$.  
    Let $C'$ be another circle centered at $p$ of radius $r+a'\leq r+a/2$.  Since targets within distance $r+a'$ from $p$ in $C_s$ and $C_p$ are covered, 
    an uncovered target must be in $C'\setminus (C_s \cup C_p)$.  The worst case would attain when $C'$ is largest, so we assume that $a'=a/2$.
    Also, note that the covered interval would be smallest when $a$ is largest; thus, we also assume that $a=r/2$.  This case is shown in Figure~\ref{fig:one-cover-interval}.

    \begin{figure}
    \centering
    \begin{tikzpicture}[scale=0.5, every node/.style={scale=0.6}]
        \begin{scope}
            \clip (0,0) circle (5);
            \clip (0,0) -- (6.5,7.6) -- (13,0) -- (6.5,-7.6) -- (0,0);
            \fill[color=lightgray] (0,4) rectangle (10,-4);
        \end{scope}

        \fill[color=white] (0,0) circle (4);

        \draw[black] (0,0) circle (4);
        \draw[black] (2,0) circle (4);
        \draw[red, dashed] (0,0) circle (5);
        
        \draw[black,thick,->] (0,0) -- (7,0);
        \draw[black] (0,0) -- node[pos=0.5, sloped, above] {$r+a'$} (3.25,3.8);
        \draw[black] (0,0) -- (3.25,-3.8);
        
        \draw[blue, dashed] (2,0) -- node[black, pos=0.5, sloped, above] {$r$} (3.25,3.8);
        \draw[black] (0,0) -- node[pos=0.5, below] {$a$} (2,0);
        
        \draw[blue, dotted] (3.25,0) node[anchor=north] {$q$} -- (3.25,3.8) ;
        \draw[blue, thin] (3.1,0) -- (3.25,0) -- (3.25,0.15) -- (3.1,0.15) -- (3.1,0);

        \filldraw [blue] (0,0) circle (2pt) node[anchor=north] {$p$};
        \filldraw [blue] (2,0) circle (2pt) node[anchor=north] {$s$};
        \filldraw [black] (3.25, 3.8) circle (2pt) node[anchor=south west] {$v$};
    \end{tikzpicture}
    
    \caption{
    How a single sensor at $(a,0)$ covers targets in angle interval $[-\theta,\theta]$.  
    The black circles are $C_p$ and $C_s$; the dashed circle is $C'$.
    Covered area considered in the lemma is shaded.
    }
    \label{fig:one-cover-interval}
    \end{figure}
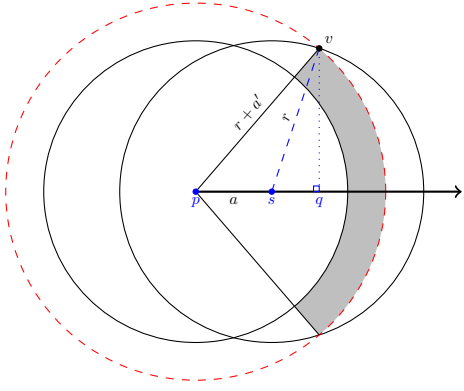
  
    We calculate the angle $\theta$ of $vps$ which is half the angle covered.  
    From the assumptions, we know that the length of $\overline{pv}$ is $r + a'=r+r/4=5r/4$, the length of $\overline{sv}$ is $r$, and the length of $\overline{ps}=a=r/2$.
    Let $q$ be the point on $\ell$ such that angle $pqv$ is the right angle.  Elementary calculation shows that the length of $\overline{sq}$ = $5r/16$.  Thus, the angle $\theta$ equals
    \[
    \arccos \left(\frac{r/2+5r/16}{5r/4}\right)
    =\arccos (13/20).
    \]
    Since the covered interval is $[-\theta,\theta]=[-\arccos (13/20),\arccos (13/20)]$, the lemma follows.
\end{proof}

\subsection{Proof of Lemma~\ref{lemma:two-regions-close}}
\label{sect:geo-lemmas}

\tworegionsclose*

\begin{proof}

To prove this lemma, we assume that $C_{s'}\setminus(C_p\cup C_s)$
contains two regions $R_1$ and $R_2$ and there is some point $t_1$ in
the regions at distance at least $r+a'/2$ from station $p$.  Without
loss of generality, we assume that region $R_1$ is the region of
points with positive $y$ co-ordinates and $R_2$ is the region of
points withs negative $y$ co-ordinates.  We further assume that
$t_1\in R_1$.

Let $\delta=a'/2\leq a/4$.  Given this configuration, we note that we
can certainly move $s'$ such that (1) $t_1$ is at the distance exactly
at $r+a'/2=r+\delta$ from $p$ and also lies on the boundary of $C_s$,
and (2) region $R_2$ contains exactly one point $t_2$ at the
intersection of the boundary of $C_p$ and $C_s$.  Furthermore, this can be done without increasing the distance from $s'$ to $p$.
Figure~\ref{fig:t1t2assumption} illustrates this assumption where
$\delta=a'/2$.  We note that $s'$ is not shown.

\begin{figure}[h]
  \centering
  \begin{tikzpicture}
    \coordinate (P) at (0,0);
    \coordinate (S) at (1,0);
    \coordinate (T1) at (1.03,+2.00);
    \coordinate (T2) at (0.50,-1.94);
    \draw[gray] (2,0) arc (0:360:2);
    \draw[gray] (3,0) arc (0:360:2);
    \draw[red] (2.53,0.06) arc (0:360:2);
    \draw (P) -- (T1) node[pos=0.5, anchor=east] {\small $r{+}\delta$};
    \draw (P) -- (T2) node[pos=0.5, anchor=east] {\small $r$};
    \draw (S) -- (T1) node[pos=0.5, anchor=west] {\small $r$};
    \draw (S) -- (T2) node[pos=0.5, anchor=west] {\small $r$};
    \draw[dashed] (P) -- (S) node[pos=0.5, anchor=north] {\small $a$};
    \filldraw[black] (P) circle (1.5pt) node[anchor=east] {$p$};
    \filldraw[black] (S) circle (1.5pt) node[anchor=west] {$s$};
    \filldraw[black] (T1) circle (1.5pt) node[anchor=south] {\footnotesize $t_1$};
    \filldraw[black] (T2) circle (1.5pt) node[anchor=north] {\footnotesize $t_2$};
  \end{tikzpicture}
  \caption{Points $t_1\in R_1$ and $t_2\in R_2$ in our proofs.}
  \label{fig:t1t2assumption}
\end{figure}
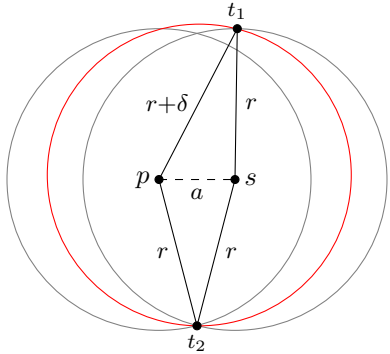

Our goal is to show that for any point $s'$ satisfying the assumption,
$s'$ must be too far from $p$; more precisely,
\[
d(p,s')>2\delta=a';
\]
thus, we have the contradiction as required.

Throughout this section, we consider a simple quadrilateral $\square
pt_2st_1$ (see Figure~\ref{fig:t1t2assumption}).  In our analysis, we
let $p$ be at location $(0,0)$ and $s$ be at $(a,0)$.
If we fix $t_1$ and $t_2$, the point $s'$ must be in $C_{t_1}\cup
C_{t_2}$, to be refered to later on as the feasible region $R_f =
C_{t_1}\cup C_{t_2}$.  Without loss of generality, we assume that $s'$
be the point in $R_f$ closest to $p$. (See
Figure~\ref{fig:feasible-region}.)  The following lemma states the
location of $s'$.

\begin{lemma}\label{lemma:intersect-coordinate}
  Point $s'$ is at 
  \[
  \left(
    \frac{\delta^2 + 2r\delta + a^2}{2a} - \frac{a}2,
    \frac{\sqrt{((2r+\delta)^2 - a^2)(a^2 - \delta^2)}}{2a} - \frac{\sqrt{(2r)^2 - a^2}}2
  \right).
  \]
\end{lemma}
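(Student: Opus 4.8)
The plan is to place coordinates with $p=(0,0)$ and $s=(a,0)$, write down $t_1$ and $t_2$ explicitly from their defining distance conditions, and then solve the (convex) minimization of distance to $p$ over the feasible region for $s'$.

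First I would pin down $t_1$ and $t_2$. The point $t_2$ is the lower intersection of $\partial C_p$ and $\partial C_s$, so it satisfies $x^2+y^2=r^2$ and $(x-a)^2+y^2=r^2$; subtracting these gives $x=a/2$, and hence $t_2=\bigl(\tfrac a2,\,-\tfrac12\sqrt{(2r)^2-a^2}\bigr)$. The point $t_1$ lies on $\partial C_s$ at distance $r+\delta$ from $p$ (recall $\delta=a'/2$), so it satisfies $x^2+y^2=(r+\delta)^2$ and $(x-a)^2+y^2=r^2$; subtracting the two circle equations (the radical-axis step) yields $x=\tfrac{a^2+2r\delta+\delta^2}{2a}$, and back-substituting and factoring the resulting difference of squares gives $y=\tfrac1{2a}\sqrt{((2r+\delta)^2-a^2)(a^2-\delta^2)}$, where the positive root is taken since $t_1\in R_1$. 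All of this is elementary algebra.

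Next I would analyze the minimization. Since $s'$ must cover both $t_1$ and $t_2$ it lies in the feasible region $R_f$, the set of points within distance $r$ of both $t_1$ and $t_2$, i.e.\ $C_{t_1}\cap C_{t_2}$, which is a convex lens; and since $d(p,t_1)=r+\delta>r$ we have $p\notin R_f$, so the point $s'$ of $R_f$ closest to $p$ is well defined and lies on $\partial R_f$. Writing the KKT / normal-cone conditions for $\min_{x\in R_f}\lVert x-p\rVert$, the minimizer is one of: (i) the metric projection of $p$ onto the disk $C_{t_1}$, namely the point $z_1$ on the ray from $p$ towards $t_1$ at distance $r$ from $t_1$; (ii) the metric projection onto the disk $C_{t_2}$, which would be $p$ itself because $d(p,t_2)=r$ — impossible since $p\notin C_{t_1}$; or (iii) a vertex of the lens, i.e.\ a point of $\partial C_{t_1}\cap\partial C_{t_2}$. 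Case (i) is ruled out by checking $d(z_1,t_2)>r$, so $z_1\notin R_f$; case (ii) is vacuous as noted. Hence $s'$ is a lens vertex.

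Finally, the two lens vertices are the solutions of $\lVert x-t_1\rVert=\lVert x-t_2\rVert=r$: subtracting these equations gives the perpendicular bisector of $\overline{t_1t_2}$, which meets $\partial C_{t_2}$ in two points; one of them is $s=(a,0)$ (indeed $d(s,t_1)=d(s,t_2)=r$), and the other — the one closer to $p$, hence equal to $s'$ — is obtained by translating $t_1$ by $-\bar t_2$, where $\bar t_2=\bigl(\tfrac a2,\,+\tfrac12\sqrt{(2r)^2-a^2}\bigr)$ is the upper intersection of $\partial C_p$ and $\partial C_s$. One checks directly that $t_1-\bar t_2$ lies on both circles, since $\lVert(t_1-\bar t_2)-t_1\rVert=\lVert\bar t_2\rVert=r$ and $(t_1-\bar t_2)-t_2=t_1-s$ has length $r$, and expanding the coordinates gives exactly the claimed point. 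I expect the main obstacle to be the case analysis that selects the correct feature of the lens — in particular verifying $d(z_1,t_2)>r$ and that the relevant vertex is $s'$ rather than $s$; once that vertex is identified, the remaining algebra is routine.
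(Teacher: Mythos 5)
Your proposal is correct and lands on the same computation as the paper: identical coordinates for $t_1$ and $t_2$ via the radical-axis subtraction, and the same final point, since your $t_1-\bar t_2$ equals $t_1+t_2-s$, which is exactly the paper's description of $s'$ as the reflection of $s$ through the midpoint $m$ of $\overline{t_1t_2}$ (the two vertices of an equal-radius lens are symmetric about the midpoint of its centers). The difference is in how the key identification is justified. The paper simply asserts that $s\in R_f$ is the farthest point of $R_f$ from $p$ and concludes ``thus'' that $s'$ is the reflected vertex; this skips the possibility that the closest point of the lens to $p$ lies in the relative interior of one of the two bounding arcs. Your normal-cone case analysis for projecting $p$ onto $C_{t_1}\cap C_{t_2}$ closes exactly that gap: case (ii) is genuinely vacuous since $d(p,t_2)=r$ puts $p$ on $\partial C_{t_2}$ but $d(p,t_1)=r+\delta$ keeps it out of $C_{t_1}$, and case (i) does require the check $d(z_1,t_2)>r$, which you assert rather than verify --- it holds because $\angle t_1pt_2$ is obtuse in this configuration ($\cos\angle t_1pt_2<0\leq \delta/(2r)$ for $a\leq r/2$, $\delta\leq a/2$), so $d(z_1,t_2)^2=\delta^2+r^2-2\delta r\cos\angle t_1pt_2>r^2$. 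You also correctly read $R_f$ as the \emph{intersection} $C_{t_1}\cap C_{t_2}$; the paper's ``$C_{t_1}\cup C_{t_2}$'' is evidently a typo, as the figure and the covering requirement make clear. The one remaining loose end, shared with the paper, is the selection of the correct vertex: having reduced to the two lens vertices $s$ and $t_1+t_2-s$, you should note (as the paper does via ``$s$ is the farthest point of $R_f$'') why the minimizer is the latter; for the downstream contradiction $d(p,s')>2\delta$ this is harmless either way, since $d(p,s)=a\geq 2a'>2\delta$ as well.
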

\begin{proof}
  The location for $t_2$ is at
  \[
  \left( \frac{a}2, -\frac{\sqrt{(2r)^2 - a^2}}2 \right),
  \]
  and the location for $t_1$ is at
  \[
  \left( \frac{\delta^2 + 2r\delta + a^2}{2a},
  \frac{\sqrt{((2r+\delta)^2 - a^2)(a^2 - \delta^2)}}{2a} \right).
  \]
  Let $m$ be a midpoint between $t_2$ and $t_1$, the coordinate for
  $m$ is the average of coordinates for $t_1$ and $t_2$.
  We remark that $s\in R_f$.  Furthermore, $s$ is the furthest point
  in $R_f$ from $p$.  Thus, the point $s'$ is the reflection of $s$ on
  the point $m$; its coordinate can be calculated to be as stated
  given that $s$ is at $(a, 0)$.
\end{proof}

\begin{figure}[h]
  \centering
  \begin{tikzpicture}
    \coordinate (P) at (0,0);
    \coordinate (S) at (2,0);
    \coordinate (T1) at (1.73,+2.99);
    \coordinate (T2) at (1.00,-2.83);
    \coordinate (Ss) at (0.73,+0.16);
    \filldraw[gray!20] (T1) -- (1.30,2.26) arc (239.79:275.25:0.84) -- (T1);
    \draw (1.30,2.26) arc (239.79:275.25:0.84);
    \draw (1.59,2.39) node {$\theta$};
    \filldraw[blue, opacity=0.25] (S) arc (70.53:95.26:3) arc (250.53:275.25:3);
    \draw[blue!80] (1.42,0.35) node {$R_f$};
    \draw[dotted] (2.50,-0.23) arc (59.95:115.91:3);
    \draw[dotted] (0.10,+0.47) arc (237.20:285.58:3);
    \draw (P) -- (T1) node[pos=0.5, anchor=east] {\small $r{+}\delta$};
    \draw (P) -- (T2) node[pos=0.5, anchor=east] {\small $r$};
    \draw (S) -- (T1) node[pos=0.5, anchor=west] {\small $r$};
    \draw (S) -- (T2) node[pos=0.5, anchor=west] {\small $r$};
    \draw[dashed] (P) -- (S) node[pos=0.5, anchor=north] {\small $a$};
    \filldraw[black] (P) circle (2pt) node[anchor=east] {$p$};
    \filldraw[black] (S) circle (2pt) node[anchor=west] {$s$};
    \filldraw[black] (T1) circle (2pt) node[anchor=west] {$t_1$};
    \filldraw[black] (T2) circle (2pt) node[anchor=west] {$t_2$};
    \filldraw[black] (Ss) circle (2pt) node[anchor=south] {$s'$};
  \end{tikzpicture}
  \caption{The feasible region $R_f$ and $s'$.}
  \label{fig:feasible-region}
\end{figure}
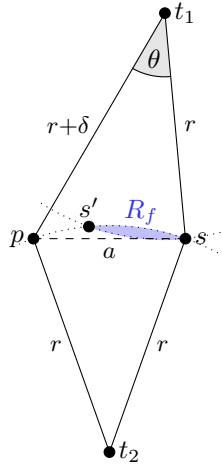

With Lemma~\ref{lemma:intersect-coordinate}, considering only the $x$
coordinates of $s'$, we have that
\begin{align*}
  \frac{\delta^2+2r\delta+a^2}{2a}-\frac{a}{2}
  &= \frac{\delta^2+2r\delta+a^2-a^2}{2a} \\
  &= \frac{\delta^2+2r\delta}{2a} \\
  &\geq \frac{\delta^2+2r\delta}{r} \\
  &= \frac{\delta^2}{r}+2\delta \\
  &> 2\delta,
\end{align*}
where the first inequality follows from the fact that $a\leq r/2$
and the second strict inequality is true because $\delta>0$.
Thus, $d(p,s')> 2\delta$ as required.

\end{proof}

\subsection{Proof of Lemma~\ref{lemma:additional-angles}}

\label{sect:proof-of-lemma-additional-angles}

\additionalangles*

\begin{proof}
Let $C_s$ and $C_{s'}$ be two coverage circles for sensor $s$ and $s'$, and $C_p$ be the circle of radius $r$ centered at $p$.
Lemma~\ref{lemma:two-regions-close} implies that if the area $C_{s'}\setminus (C_s\cup C_p)$ contains two regions, they can be covered by two circles located at distance at most $a'/2$ from $p$, with the total cost of less than $a'$; but this leads to a contradiction.

Thus, $C_{s'}\setminus (C_s\cup C_p)$ contains only on region. 
Let $v$ be the intersection of $C_{s'}$ and $C_s$ outside $C_p$.  (See Figure~\ref{fig:two-cover-intervals}.)
We claim that the additional angle covered by $C_{s'}$ is at least $\arccos(13/20)$.  To see this, assume otherwise.    In this case, the distance from $v$ to $p$ is less than $r+a'$; thus we can move $s'$ along the circle of radius $a'$ centered at $p$ with the same movement cost while increasing the coverage area outside $C_s\cup C_p$.
\end{proof}


\end{document}